\algrenewcommand\Return{\State \algorithmicreturn{} } 
\newtheorem*{remark}{Remark}
\newtheorem{theorem1}{Theorem}
\newtheorem{proposition}[theorem1]{Proposition}
\newtheorem{theorem2}{Theorem}
\newtheorem{definition}[theorem2]{Definition}
\newtheorem{theorem3}{Theorem}
\newtheorem{corollary}[theorem3]{Corollary}
\newtheorem{theorem4}{Theorem}
\newtheorem{lemma}[theorem4]{Lemma}
\newtheorem{mydef}{Definition}
\begin{document}
\title{The Optimal and the Greedy: Drone Association and Positioning Schemes for Internet of UAVs}

\author{
    \IEEEauthorblockN{Hajar~El~Hammouti,~\IEEEmembership{Member,~IEEE,}
        Doha~Hamza,~\IEEEmembership{member,~IEEE,}
				Basem~Shihada,~\IEEEmembership{Senior member,~IEEE,}
        Mohamed-Slim~Alouini,~\IEEEmembership{Fellow,~IEEE,}
        and Jeff~S. Shamma~\IEEEmembership{Fellow,~IEEE}}
    \IEEEauthorblockA{ CEMSE division, King Abdullah University of Science and Technology
(KAUST), Thuwal, Makkah Province, KSA.\\
\{hajar.hammouti,doha.hamzamohamed,basem.shihada, slim.alouini,jeff.shamma\}@kaust.edu.sa
}
}
\maketitle
\begin{abstract}
This work considers the deployment of unmanned aerial vehicles (UAVs) over a pre-defined area to serve a number of ground users. Due to the heterogeneous nature of the network, the UAVs may cause severe interference to the transmissions of each other. Hence, a judicious design of the user-UAV association and UAV locations is desired. A potential game is defined where the players are the UAVs. The potential function is the total sum rate of the users. The agents' utility in the potential games is their marginal contribution to the global welfare or their so-called wonderful life utility. A game-theoretic learning algorithm, binary log-linear learning (BLLL), is then applied to the problem. Given the potential game structure, a consequence of our utility design, the stochastically stable states using BLLL are guaranteed to be the potential maximizers. Hence, we optimally solve the user-UAV association and 3D-location problem. Next, we exploit the submodular features of the sum rate function for a given configuration of UAVs to design an efficient greedy algorithm. Despite the simplicity of the greedy algorithm, it comes with a performance guarantee of $1-1/e$ of the optimal solution. To further reduce the number of iterations, we propose another heuristic greedy algorithm that provides very good results. Our simulations show that, in practice, the proposed greedy approaches achieve significant performance in a few number of iterations.  

\end{abstract}

% Note that keywords are not normally used for peerreview papers.
\begin{IEEEkeywords}
UAV-enabled networks, users-UAVs association, 3D placement, potential game, binary log-linear learning, greedy algorithm.
\end{IEEEkeywords}%}

\section{Introduction}

%As cities grow and become more developed, they increasingly rely on technology to offer a wide range of sophisticated services and improve citizens' quality of life. One of these key technologies, that is playing -and will continue to play- an important role in smart cities, is the unmanned aerial vehicles (UAVs) technology, also known as drones. 
According to a recent report of the federal aviation authority (FAA)\cite{FAA}, the number of drones in USA has reached 2 millions in 2019 and is estimated to attain 2.5 millions by 2025.
Indeed, in the near future, thousands of drones are expected to navigate autonomously over cities to deliver a plethora of services such as traffic reporting, package delivery, and public surveillance~\cite{Survey2016,li2018uav}. The main virtue of such technology is the high mobility of drones, their versatile nature, their rapid deployment, and the extremely wide range of services they can provide. 

%According to a recent report of the federal aviation authority (FAA), the number of drones in USA has reached 2 millions in 2019 and is estimated to attain 2.5 millions by 2025. According to the same report,

One of the earliest applications of drone-related services is in the telecommunications industry~\cite{hayat2016survey,fotouhi2019survey}. Equipped with smart transceivers, drones can be deployed as flying base stations that extend coverage in crowded places and remote areas. They can also be deployed as aerial relays that collect or disseminate data in an Internet of things environment. Also, thanks to their fast deployment, drones can be used in a post-disaster scenario to replace damaged ground base stations.

Although the application of drones in the telecommunications industry is very appealing, their efficient deployment still faces several technical challenges that range from trajectory planning to channel modeling~\cite{yan2019comprehensive} and 3D placement~\cite{mozaffari2018tutorial,gupta2016survey,mkiramweni2019survey,bulut2018trajectory,zeng2017energy,ChannelKhuwaja,ravi2016downlink,hayajneh2016optimal}. In this paper, we are interested in the optimal deployment of drones coupled with the optimal drone-users association. Although this problem has been widely investigated in the literature, none of the existing works has provided an optimal solution to the studied problem, specifically when interference is considered. Furthermore, only a few works measure the efficiency of their proposed approach against the optimal one. In this paper, we undertake this task by answering the following questions: What is the approach to guarantee an optimal deployment of UAVs and an optimal drone-users association to maximize the downlink sum-rate, in the presence of interference and with bandwidth and quality of service constraints? What is the cost of such an optimal solution? Are there any alternative approaches that reach an efficient solution to the problem in a fewer number of iterations? How efficient is this solution, and how far is it from optimal? 

\subsection{Related Work}
The optimal 3D placement of UAVs has received considerable attention in the last few years. One of the first works to study the placement of the drones in the 3D space for communications purposes is the work by Mozaffari \textit{et al.} in~\cite{mozaffari2015drone}. In that paper, the authors provide closed-form expressions for an optimal height that maximizes the drones' coverage area. The work mainly focuses on the cases of single and two drones. For the two drones scenario, the authors show that the presence of interference increases the complexity of the system leading to a challenging optimization problem. This problem has been extended in~\cite{mozaffari2016efficient} to a multiple drones scenario. In~\cite{mozaffari2016efficient}, the authors consider interference coming from the nearest neighbor only. This approximation results in a tractable coverage optimization that is solved using circle packing theory. In general, when interference is not considered, the objective function becomes convex with respect to the 3D placement. For this reason, the authors in~\cite{shakhatreh2017efficient} adopt a gradient descent based algorithm to efficiently place the UAV in order to minimize the transmit power required to cover indoor users. The problem of 3D placement to maximize the number of covered users has also been tackled in~\cite{alzenad20173} and has been solved by decoupling horizontal and vertical placements without any loss of optimality.

Moreover, the overall UAV-enabled network performance is tightly related to the number of served users. In the classical network association, users are either served by the closest base station (Voronoi association), or they are assigned to the base station with the best signal-to-interference-and-noise-ratio (weighted Voronoi association). In either case, the distance-only based association may result in highly congested base stations and unbalanced resource allocation across the network. Hence, a handful of works can be found in the literature that study the association rule along with the 3D placement of the UAVs. The joint 3D optimization and user association is challenging to tackle. On one hand, it involves non-convex objective or/and constraint functions with mixed variables (continuous for UAVs locations and binary for the association). On the other hand, even by fixing one of the variables while dealing with the others, the problem remains non-convex and generally NP-hard. To deal with such an issue, one commonly used approach is to decompose the studied optimization into subproblems where each subproblem is addressed separately. The results of each subproblem are used as inputs for the next one, and generally, the process is repeated until convergence is reached. While such an approach can provide satisfactory results, it is not guaranteed to reach the global optimum. When using a decomposition process, the algorithm will often halt at a suboptimal solution with no guaranteed bounds on the suboptimality gap. Furthermore, most of the proposed approaches have no provable convergence properties. 

 For example, by using k-means and particle swarm optimization sequentially, the joint user association and 3D locations was addressed in~\cite{kalantari2017user} in order to maximize the logarithmic rate of the users under delay and backhaul constraints. A similar decomposition approach was proposed in~\cite{farooq2018multi} to first connect devices to the UAVs using matching theory and then optimally place the UAVs in the 2D space using control theory. Using transport theory, cell partitioning was proposed in~\cite{mozaffari2017mobile} to cluster the users, and then, the non-convex 3D placement optimization problem is solved using sequential quadratic programming. In line with the previous works, the approach proposed in~\cite{el2019learn} relies on combining distributed algorithms in order to address the user association, the 2D placement, and the altitude adjustment subproblems separately. Due to the complexity of the studied problem, none of the previously cited papers has provided an approach that exactly solves the target optimization. Indeed, the studied problem is not only non-convex and challenging to solve but it is also NP-hard, meaning that a polynomial-time algorithm that exactly solves the optimization problem does not exist~\cite{woeginger2003exact}. This implies that the optimal solution will necessarily lead to an exponential-time search.
 
 It is important to note that under the terrestrial communications setup, similar resource allocation problems have been investigated, and approaches to reach the exact optimum were proposed. For example, in~\cite{hou2011distributed}, the authors propose an algorithm based on a Gibbs sampler to optimize the joint channel selection and users association in WLAN networks. A more general work was presented in~\cite{borst2013nonconcave} where the authors develop a framework based on Markov Random Fields and Gibbs measures to exactly solve the resource allocation problem in OFDMA networks. Unlike the previously cited works, we tackle the 3D placement problem which is inherent to air-to-ground communications and present a learning mechanism that requires little knowledge of the search space. The learning algorithm, binary log-linear learning (BLLL), is a game-theoretic algorithm that was introduced in \cite{marden2012revisiting} and since then has found wide applicability in wireless communication \cite{xu2016distributed, dai2018energy, zhu2013distributed}. The idea is simple: By designing the agents' utilities, we formulate our problem as a potential game. Then only one agent, a UAV, is active at a time. The active agent compares the utilities of two actions: Its current action and another feasible one. A Gibbs sampler then chooses the actual action based on probabilities calculated from the potential utilities of the two actions. The work in \cite{marden2012revisiting} confirms that such a simple learning rule is guaranteed to linger at the potential maximizers in potential games. 
 
 Since the considered problem is NP-hard, the convergence of BLLL is exponentially slow. Hence, we also provide a greedy algorithm with a performance guarantee of achieving at least one half of the optimal solution. Our greedy approach leverages the submodular properties of the studied problem in order to guarantee an efficient performance. We also refer to the papers~\cite{su2016submodular,lakiotakis2019joint,wang2016low}, that reformulate the resource allocation optimization as a submodular maximization to provide a lower bound approximation on the proposed solutions. These papers either ignore interference in the objective function  (for the rate maximization in~\cite{su2016submodular,wang2016low}) or consider a very specific objective function with innate monotonicity and submodularity properties (for the caching problem in~\cite{lakiotakis2019joint}).

\subsection{Contribution}

Many works exist on the 3D deployment of UAVs and users association for sum-rate maximization. To the best of our knowledge, none provide the exact solution for the studied optimization problem, specifically under the presence of interference between the UAVs. This is especially true since the formulated optimization problem is NP-hard. Hence, our contributions can be summarized as follows.
\begin{itemize}

\item First, we propose a game-theoretic learning algorithm, binary log-linear learning (BLLL), to help the agents, the UAVs, reach an efficient outcome. By formulating the UAV association and location problem as a potential game, BLLL finds the optimal 3D deployment and users association.  
 
\item Second,  we propose a greedy algorithm that exploits  submodular features of the optimization problem when considered for a given configuration of UAVs. The greedy approach provides a performance guarantee of $1-1/e$ of the maximum sum-rate value.

\item Third, we provide a heuristic approach that achieves efficient results in only a few iterations.

\item Finally, we compare the proposed approaches in terms of convergence rate, computational complexity, memory requirement, and exchanged information.
\end{itemize}

\subsection{Organization of the Paper}

The rest of the paper is organized as follows. Next, we describe the adopted system model. Then, we formulate our optimization problem in section~\ref{Prob}. In section.~\ref{Maxi}, we formulate the interactions between UAVs and users as a potential game, and implement the BLLL in order to find the optimal 3D placement and users association that maximize the sum-rate function. Next, we study the submodularity of the objective function and the matroid structure of the constraints in section~\ref{Submodularity}. Two greedy approaches are studied in section~ 
\ref{Greedy}. The proposed algorithms are compared in section~\ref{BLLLVs}. Finally, simulation results are provided in section~\ref{Simul}.

\subsection{Notations}
Throughout the paper, we adopt the following notations. The Cartesian products of two sets $A$ and $B$ is denoted $A\times B$. $|A|$ denotes the cardinality of the set $A$. Vectors and matrices are denoted using boldface letters $\textbf{x}$, whereas scalars are denoted by $x$. 

\section{System Model}\label{Sys}

\begin{figure}
\begin{center}
\includegraphics[scale=0.25]{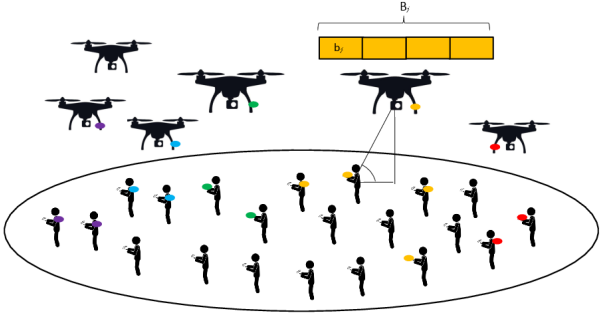}
\end{center}
\caption{Figure shows the setup of our problem. UAVs have designated bandwidths $B_j$ divided into subchannels $b_j$ that they can assign to different users. Users assigned to UAVs are indicated by circles of the same color as the UAV. }
\label{fig:model}
\end{figure}

We assume a drones-enabled network where a set $\cal J$ of $J$ UAVs are deployed over a target area to serve a set $\cal I$ of $I$ ground users. In order to capture the channel variations between the user and the UAV, we adopt the commonly used air-to-ground channel model where the path loss is averaged over line-of-sight (LoS) and non-line-of-sight (NLoS) links where  the probability of LoS is given by ~\cite{al2014modeling}

\begin{equation}
p^{\rm LoS}_{ij}(r_{ij},d_{ij})=\frac{1}{1+\epsilon \cdot{\rm exp}\left(-\beta\frac{180}{\pi}{\rm arctan}\frac{\sqrt{r_{ij}^2-d_{ij}^2}}{d_{ij}}-\epsilon\right)},
\end{equation} 
with $d_{ij}$ is the 2D plane distance from the projected position of UAV $j$ to user $i$, $r_{ij}$ is the distance between the UAV and the user, $\epsilon$ and $\beta$ are environment-dependent parameters.

Consequently, the path loss between UAV $j$ and user $i$ can be formulated  as 

\begin{equation}
L_{ij}(\! r_{ij}\!,\! d_{ij}\!)\!=\!\left(\!\frac{4 \pi\! fr_{ij}}{c}\!\right)^{-\alpha}\!\!\!\!\!\!\!\big(\zeta_{{\rm LoS}}p^{\rm LoS}_{ij}\!(\!r_{ij}\!,\! d_{ij}\!)\!+\!\zeta_{{\rm NLoS}}(\!1\!-\!p^{\rm LoS}_{ij}\!(\!r_{ij}\!,\! d_{ij}\!)\!)\big)^{-1}
\end{equation}

\noindent with $f$ the carrier frequency, $c$ the speed of the light, and $\alpha$ the path loss exponent. $\zeta_{\rm LoS}$ and $\zeta_{\rm NLoS}$  are the parameters for LoS and NLoS losses respectively.

Accordingly, the signal-to-noise-and-interference-ratio (SINR) received at user $i$ from UAV $j$ can be written 

\begin{equation}
    \gamma_{ij}=\frac{P_jL_{ij}(r_{ij},d_{ij})}{\sigma^2+\sum\limits_{k\neq j}P_kL_{ik}(r_{ik},d_{ik})},
\end{equation}

\noindent where $P_j$ is the transmit power of UAV $j$.

We consider the downlink communication channel and are interested in the spectral efficiency $\eta_{ij}$ between a UAV $j$ and a user $i$. $\eta_{ij}$ is given by 

\begin{equation}
 \eta_{ij}={\rm log}_2(1+\gamma_{ij})\big. 
\end{equation}

Due to backhaul limitations, we assume that each UAV $j$ has a limited number of users $N_j$ to connect with. Each UAV equally allocates a bandwidth $b_j$ to its served user. Therefore, each ground user receives a throughput $R_{ij}$ that can be formulated as

\begin{equation}
 R_{ij}=b_j \eta_{ij}.
\end{equation}

\section{Problem Formulation}\label{Prob}
We are interested in the downlink sum-rate of the ground users. Our objective is to optimally deploy the UAVs in the 3D space and associate the users in order to maximize the sum-rate function. 
Let $\textbf{q}=(q_{ij})$ be the binary UAVs-users association matrix and  $(\textbf{x},\textbf{y},\textbf{h})$ the UAVs 3D positions. Let $\mathcal{U}$ be the set of users and $\mathcal{K}$ the set of UAVs. The target optimization problem is formulated as follows.
\vspace{-0.3cm}
\begin{maxi!}
{\textbf{q},(\textbf{x},\textbf{y},\textbf{h})}{\sum \limits_{j \in \mathcal{K}}\sum \limits_{i \in \mathcal{U}} q_{ij}R_{ij}\label{objective}}
{\label{GeneralOptimizationOrigin}}{}
%\addConstraint{R_i=b_{ij} \eta_{ij},\quad\forall i \in \mathcal{U},\forall j \in \mathcal{B}^A\label{waw0}}
%{}{}
\addConstraint{\sum_{i} q_{ij}\leq N_j,\quad\forall j \in \mathcal{J}\label{waw1}}
{}{}
\addConstraint{ \frac{q_{ij}}{\eta_{ij}}\leq \frac{1}{\eta^{\rm min}},\quad\forall (i,j) \in \mathcal{I}\times\mathcal{J}\label{waw11}}
{}{}
\addConstraint{ x^{\rm min}\leq x_j \leq x^{\rm max}\quad \forall j \in \mathcal{J}\label{waw12}}
{}{}
\addConstraint{ y^{\rm min}\leq y_j \leq y^{\rm max} \quad\forall j \in \mathcal{J}\label{waw13}}
{}{}
\addConstraint{ h^{\rm min}\leq h_j\leq h^{\rm max}  \quad\forall j \in \mathcal{J}\label{waw14}}
{}{}
\addConstraint{\sum_{j} q_{ij}\leq 1,\quad\forall i \in \mathcal{I}\label{wat}}
{}{}
\addConstraint{q_{ij}\in \{0,1\},\quad\forall (i,j) \in \mathcal{I}\times\mathcal{J}\label{wawf3}}
{}{}
\end{maxi!}
Constraint~(\ref{waw1}) ensures that the number of associated users for each UAV $j$ does not exceed its maximum quota of users $N_j$. Constraint~(\ref{waw11}) guarantees a certain quality of service for each associated user by ensuring that its spectral efficiency is no less than a predefined threshold $\eta^{\rm min}$. Constraints~(\ref{waw12}),~(\ref{waw13}) and ~(\ref{waw14}) ensure that the 3D coordinates of all the UAVs belong to a target cubic space. Finally, constraints~(\ref{wat}) and~(\ref{wawf3}) restrict the ground user to be associated with, at most, one UAV.

The problem under analysis is mathematically challenging as it involves a non-convex objective function and a non-convex constraint(constraint~(\ref{waw11})). It also includes integer and continuous variables which makes a mixed interger non-linear programming (MINLP). Moreover, the association problem can be formulated as a knapsack problem, known as NP-hard.  

In the following, we solve the target optimization using a game-theoretic approach. The optimal solution of the studied problem is therefore obtained using binary log-linear learning (BLLL), a learning algorithm that provides guarantees on reaching the maximizers of the objective function when the game is proved to be a potential game.

\section{Sum-Rate Maximization}\label{Maxi}

In order to solve the underlying optimization, we discretize the 3D space and represent it in the form of a 3D grid. We formulate the interactions between UAVs and users as a potential game where the downlink sum-rate is the potential function. Then, BLLL is implemented on UAVs in order to find the optimal 3D placement and users association that maximize the sum-rate function.

\subsection{Game Formulation}
\subsubsection{Background}
In game theory, a potential game is a game where any unilateral change in a player's utility results in an equal change in a global welfare function called \textit{potential function}. Therefore, whenever a player performs an action that improves its utility, he also improves the potential function. More formally, the definition of a potential game is given bellow. 
\begin{definition}[Potential game]~\cite{monderer1996potential}
 Let $\mathcal{A}$ be a set of strategy profiles of a game $\mathcal{G}$. 
$\mathcal{G}$ is a potential game if there exists a potential function $F: \mathcal{A}\longrightarrow \mathbb{R}$ such that for each player ${j}$, $\forall (a_j,\textbf{a}_{-j}) \text{ and } (a_j',\textbf{a}_{-j})~\in~\mathcal{A}$ %~\cite{monderer1996potential,coucheney2014general}
\begin{equation}
F(\! a_j,\textbf{a}_{-j}\!)\!-\!F(a_j'\!,\textbf{a}_{-j}\!)\!=\!U_{j}(a_j\!,\textbf{a}_{-j}\!)\!-\!U_j(a_j'\!,\textbf{a}_{-j}\!),
\end{equation}
\noindent where $U_j$ is the utility of player $j$.
\end{definition}

\subsubsection{UAVs potential game}
Let us consider the 3D grid, where
$X=\{x^{\rm min},x^{\rm min}+\delta x, x^{\rm min}+2 \delta x, \dots, x^{\rm max}\}$, 
$Y=\{y^{\rm min},y^{\rm min}+\delta y, y^{\rm min}+2 \delta y, \dots, y^{\rm max}\}$ and
$H=\{h^{\rm min},h^{\rm min}+\delta h, h^{\rm min}+2 \delta h, \dots, h^{\rm max}\}$
represent x-,y- and z-axis respectively with $\delta x, \delta y, \delta h >0$ are their respective step granularity. Let $Q=\{0,1\}$ be an indicator.  $\mathcal{G}^d=\{\mathcal{J},\mathcal{A},\{U_j\}_1^J\}$ is the game where the UAVs are the players, $\mathcal{A}=X\times Y \times H \times Q^I$ is the set of their actions, and $U_j: \mathcal{A}\rightarrow \mathbb{R}$ the utility function such as, given the 3D deployment of all UAVs and association for all the users, the outcome of UAV $j$ is given by its marginal contribution

\begin{equation}\label{marginal}
U_j(\textbf{x},\textbf{y},\textbf{h},\textbf{q})=\sum \limits_{k \in \mathcal{K}}\sum \limits_{i \in \mathcal{U}} q_{ik}R_{ik}-\sum \limits_{k \in \mathcal{K}\backslash \{j\}}\sum \limits_{i \in \mathcal{U}} q_{ik}R_{ik}(-j), 
\end{equation}

\noindent where $R_{ik}(-j)=b_k {\rm log}_2\big(1+\frac{P_kL_{ik}(r_{ik},d_{ik})}{\sigma^2+\sum\limits_{l\neq j,k}P_lL_{il}(r_{il},d_{il})}\big)$ is the perceived rate at user $i$ when interference from the UAV $j$ is not considered.

\begin{proposition}[UAVs potential game]
The game $\mathcal{G}^d=\{\mathcal{J},\mathcal{A},\{U_j\}_{j=1}^{J}\}$ is a potential game where the potential function is the sum-rate function. 
\end{proposition}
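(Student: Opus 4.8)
The plan is to exhibit the downlink sum-rate
\[
F(\textbf{x},\textbf{y},\textbf{h},\textbf{q})=\sum_{k\in\mathcal{K}}\sum_{i\in\mathcal{U}}q_{ik}R_{ik}
\]
as a potential for $\mathcal{G}^d$ by showing that each player's utility equals $F$ minus a term that does not involve that player's own action. This is precisely the structural feature of a marginal-contribution (``wonderful life'') utility, and once it is established the potential-game identity follows immediately.

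First I would split the utility in~\eqref{marginal} as $U_j=F-\Phi_j$, where
\[
\Phi_j(\textbf{a}_{-j})\;:=\;\sum_{k\in\mathcal{K}\setminus\{j\}}\sum_{i\in\mathcal{U}}q_{ik}\,R_{ik}(-j).
\]
The heart of the argument is to check that $\Phi_j$ depends only on the other players' actions $\textbf{a}_{-j}$, i.e.\ it is invariant under any change of UAV $j$'s action $a_j=(x_j,y_j,h_j,(q_{ij})_{i\in\mathcal{U}})$. Two observations suffice. (i) Only the association variables $q_{ik}$ with $k\neq j$ occur in $\Phi_j$; these are selected by the remaining players and therefore do not move when $a_j$ changes. (ii) By the very definition $R_{ik}(-j)=b_k\log_2\!\bigl(1+P_kL_{ik}(r_{ik},d_{ik})/(\sigma^2+\sum_{l\neq j,k}P_lL_{il}(r_{il},d_{il}))\bigr)$, the useful-signal term depends only on UAV $k$'s position, the interference sum runs over UAVs $l\notin\{j,k\}$, and the bandwidth $b_k$ is an attribute of UAV $k$ alone; hence UAV $j$'s position has been deliberately excised. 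Together, (i) and (ii) give $\Phi_j=\Phi_j(\textbf{a}_{-j})$.

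The proof then closes by direct substitution: for every player $j$ and every pair $(a_j,\textbf{a}_{-j}),\,(a_j',\textbf{a}_{-j})\in\mathcal{A}$,
\[
U_j(a_j,\textbf{a}_{-j})-U_j(a_j',\textbf{a}_{-j})=\bigl[F(a_j,\textbf{a}_{-j})-\Phi_j(\textbf{a}_{-j})\bigr]-\bigl[F(a_j',\textbf{a}_{-j})-\Phi_j(\textbf{a}_{-j})\bigr]=F(a_j,\textbf{a}_{-j})-F(a_j',\textbf{a}_{-j}),
\]
which is exactly the defining property of a potential game with potential function equal to the sum-rate $F$.

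The step I expect to be the only genuine obstacle is observation (ii): one must make sure that $R_{ik}(-j)$ truly removes \emph{every} dependence on $a_j$ — both through the explicit interference term and through the SINR denominator — and that nothing re-couples $\Phi_j$ to $a_j$ through a side channel, for instance the per-UAV quota~\eqref{waw1}, the QoS constraint~\eqref{waw11}, or the equal bandwidth split among UAV $k$'s users. I would dispose of the latter by noting that these constraints only shape the feasible action set and never appear inside the utility expressions, so they leave the identity untouched. Everything else is routine bookkeeping.
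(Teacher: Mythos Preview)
Your proof is correct and follows precisely the route the paper has in mind: the utility in~\eqref{marginal} is the wonderful-life/marginal-contribution utility $U_j=F-\Phi_j$ with $\Phi_j$ independent of $a_j$, from which the potential identity is immediate. The paper itself dismisses the argument in a single line (``straightforward and stems from the design of the utility function''), so your write-up simply fills in the details the authors left implicit.
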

\begin{proof}
This result is straightforward and stems from the design of the utility function.
\end{proof}

\subsection{Binary Log-Linear Learning (BLLL)}
The binary log-linear algorithm is an important algorithm that belongs to the class of log-linear learning algorithms. It has the property of relaxing the synchronous updates of the players' strategies. It also does not require the knowledge of the entire action space nor the utilities of all the actions~\cite{marden2012revisiting}. When the BLLL is implemented, the player selects an action $a$ given an action $a'$ with the probability 
\begin{equation}\label{equaBLLL}
 p(a'\rightarrow a)=\frac{e^{U(a)/T}}{e^{U(a')/T}+^{U(a)/T}},   
\end{equation}
\noindent where $T$ is a tuning parameter called the temperature of the algorithm, and $U$ is the utility of the player.

\begin{corollary}\label{cor}~\cite{marden2012revisiting}
Under BLLL, the only maximizers of the potential function are the stochastically stable states of the algorithm. 
\end{corollary}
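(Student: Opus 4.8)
The plan is to obtain the statement as a specialization of the convergence theory for binary log-linear learning in~\cite{marden2012revisiting}, feeding in Proposition~1. First, I would view BLLL as a time-homogeneous Markov chain $P^{\varepsilon}$ on the finite (discretized) set of feasible action profiles, with perturbation $\varepsilon=e^{-1/T}$, so that driving the temperature $T\downarrow 0$ is the same as $\varepsilon\downarrow 0$; recall that the stochastically stable states are precisely the profiles kept in the support of the stationary distribution $\mu^{\varepsilon}$ as $\varepsilon\to 0$. The preliminary step is to certify that $\{P^{\varepsilon}\}$ is a regular perturbed Markov process: (i) for every $T>0$ the chain is irreducible and aperiodic, which needs the graph of single-UAV feasible moves (induced by constraints~(\ref{waw1}),~(\ref{waw11}),~(\ref{wat})) to be connected and \emph{reversible}, i.e.\ any unilateral feasible change of one UAV's action component can be undone; (ii) $\lim_{\varepsilon\to 0}P^{\varepsilon}$ exists; and (iii) every feasible one-step transition $a\to b$ admits a resistance $r(a\to b)\ge 0$ with $0<\lim_{\varepsilon\to 0}\varepsilon^{-r(a\to b)}P^{\varepsilon}_{ab}<\infty$. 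These are exactly the hypotheses under which~\cite{marden2012revisiting} characterizes the stochastically stable states through minimum-resistance trees.

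Second, I would compute the resistances, and this is where Proposition~1 enters. When UAV $j$ is the active player---selected with the $T$-independent probability $1/J$ and proposing a feasible alternative component $a_j'$ with $T$-independent probability---BLLL replaces $a_j$ by $a_j'$ with probability $\big(1+e^{(U_j(a_j,\textbf{a}_{-j})-U_j(a_j',\textbf{a}_{-j}))/T}\big)^{-1}$. Since $\mathcal{G}^{d}$ is a potential game with potential $F$ equal to the sum-rate, $U_j(a_j,\textbf{a}_{-j})-U_j(a_j',\textbf{a}_{-j})=F(a)-F(b)$ for the two profiles $a=(a_j,\textbf{a}_{-j})$ and $b=(a_j',\textbf{a}_{-j})$, so the acceptance probability is $\big(1+e^{(F(a)-F(b))/T}\big)^{-1}$, whence $r(a\to b)=\max\{0,\,F(a)-F(b)\}$. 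The crucial algebraic consequence is that for every pair of neighbouring profiles
\begin{equation}
r(a\to b)-r(b\to a)=F(a)-F(b),
\end{equation}
that is, the edge resistances form an exact potential aligned with $F$.

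Third, I would invoke the resistance-tree characterization of stochastic stability used in~\cite{marden2012revisiting}: the stochastically stable states are exactly the minimizers of the stochastic potential $\gamma(a)$, the least total resistance of a spanning tree of profiles directed toward $a$. Take any two feasible profiles $a,a'$ and a minimum-resistance tree $\mathcal{T}_a$ directed toward $a$; reversing the unique $\mathcal{T}_a$-path from $a'$ to $a$ yields a spanning tree directed toward $a'$, and the induced change in total resistance telescopes, by the displayed identity, to $F(a)-F(a')$. Hence $\gamma(a')\le\gamma(a)+F(a)-F(a')$, and by symmetry the reverse inequality also holds, so $\gamma(a)+F(a)$ is independent of $a$. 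Therefore $\gamma$ is minimized exactly on $\arg\max_a F(a)$, i.e.\ the stochastically stable states coincide with the maximizers of the potential, which is the claim.

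I expect the only genuine obstacle to be the first step rather than the resistance algebra (which is forced once Proposition~1 is in hand): one must verify that BLLL restricted to the \emph{constrained} action space of our problem still meets the reversibility and connectivity requirements of~\cite{marden2012revisiting}. The coupling constraint~(\ref{wat}), which permits at most one UAV to serve a given user, is the delicate case; it is accommodated by the modelling choice of giving each UAV its association vector in $Q^{I}$ as part of its own action (rather than a single global assignment), so that a UAV can always relinquish a user and every feasible single-UAV move stays reversible while the feasible-move graph remains connected.
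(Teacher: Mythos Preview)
The paper does not supply its own proof of this corollary: it is stated with a direct citation to~\cite{marden2012revisiting}, and the surrounding text merely remarks that ``the proof of convergence of such a process is based on \emph{the theory of resistance trees} and can be found in details in~\cite{marden2012revisiting}.'' Your proposal is a faithful and essentially correct reconstruction of exactly that resistance-tree argument---setting up BLLL as a regular perturbed Markov process, using Proposition~1 to identify $r(a\to b)=\max\{0,F(a)-F(b)\}$, and then running the standard tree-swap telescoping to conclude that the stochastic potential is minimized precisely on $\arg\max F$. So the approach is the same as the one the paper invokes by citation; you have simply made the cited argument explicit rather than deferring to the reference.

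One small comment: your caveat about verifying irreducibility/reversibility on the \emph{constrained} action set is well placed, but note that in the paper's setup constraint~(\ref{wat}) is not actually enforced at the level of the action space $\mathcal{A}=X\times Y\times H\times Q^{I}$ (each UAV carries its own association vector and the QoS/quota filtering is handled inside the utility evaluation and in lines~\ref{Line21}--\ref{Line3} of Algorithm~\ref{Algo1}), so the feasibility graph the paper implicitly uses is already unconstrained in the sense required by~\cite{marden2012revisiting}.
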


The idea behind the BLLL is that each player selects a random strategy (a 3D location and a subset of users to associate with without exceeding its maximum quota) with a probability proportional to $e^{U(a)/T}$, where $T$ is some strictly positive number, also called the temperature. As the temperature goes to $0$, the binary log-linear distribution concentrates on the states that maximize of the potential function. Clearly, the propability to choose an action increases when the utility with respect to this action increases. Hence, the better is the strategy, the higher is the propability to be selected. However, it is far from intuitive how such an updating rule can converge to a global optimum or how it may even converge. The proof of convergence of such a process is based on \textit{the theory of resistance trees} and can be found in details in~\cite{marden2012revisiting}. %The key idea behind the convergence is that the binary log-linear distribution induces a \textit{regular perturbed Markov process} (the transitions between the states in the perturbed Markov process are the binary log-linear probabilities). As the temperature goes to $0$, the binary log-linear distribution concentrates on the states that maximize of the potential function. 

In order to reach the maximum value of the sum-rate, all the UAVs have to adhere to the BLLL process. In Algorithm~\ref{Algo1}, when a UAV wakes up, according to its timer (line~\ref{Line1}), it selects a random location of the 3D grid and a random association with the users, computes the new utility regarding this joint 3D position and association, and then decides whether to move to this new action or not with the probability in equation~(\ref{equaBLLL}). It is important to note that in order to meet constraint~(\ref{waw11}), users who are not satisfied with their spectral efficiency are disconnected from the UAV and their rates are not included in the utility of that UAV. This process is iterated while slowly decreasing the temperature $T$ (line~(\ref{Tem})).

%It is important to notice that the BLLL does not require the knowledge of all the action space for each player nor involves the knowledge of all possible utilities. However

\begin{algorithm}%[H]
  \caption{BLLL for joint 3D position and users association selection
    }\label{Algo1}
     \begin{algorithmic}[1] 
			\State{\textbf{Initialization}: }	\State{$(\textbf{x},\textbf{y},\textbf{h})$ random matrix for 3D locations of UAVs}
			\State{$a_{ij}=0,\forall (i,j)\in \mathcal{I}\times \mathcal{J}$}
		     % \Repeat
		     \For{$T\rightarrow 0$}\label{Tem}
		     \For{$j \in \mathcal{J}$}
		     \If{$rand(1)>0.5$}\label{Line1}
		     \State{UAV $j$ selects at random one of these locations $(x_j\pm \delta x, y_j, h_j)$, $(x_j, y_j\pm \delta y, h_j)$,$(x_j , y_j, h_j\pm \delta h)$}\label{Line2}
		     \State{UAV $j$ selects at random a number of unconnected users, within its maximum quota and while respecting the minimum QoS threshold, to associate with.}\label{Line21}
		     \State{UAV $j$ computes $U_j$ with respect to the new position and association as in equation~(\ref{marginal})}\label{Line3}
		     \State{Sample its new joint location and association by using the probability in equation~(\ref{equaBLLL})}\label{Line4}
		     %\State{Disconnect users if their updated quality of service is not satisfied}\label{Line5}
		     \State{Update its current location, associations, and utility}\label{Line6}
		     \EndIf
		     \EndFor
		     \EndFor
	\end{algorithmic}
\end{algorithm}

It is to be noted that in order to reach the global optimum, BLLL requires an exponential time for convergence. To circumvent this problem, we propose a greedy approach that is guaranteed to converge to at least $1-1/e$ the optimum. Our greedy algorithm relies on the submodular property of the objective function that we discuss in the next section.

\section{Submodularity of the Objective Function}\label{Submodularity}

In this section, we proceed to analyze the submodularity of our objective function and the matroid structure of the contsraints similar to the approach in \cite{su2016}. This analysis will facilitate the greedy algorithm which we employ to solve our problem. First, we introduce the mathematical definitions of submodularity and matroids. Then, we reformulate the problem as a set function maximization problem under partition matroid constraints. This lays the foundation for the greedy algorithm we use to solve our problem in the next section.

\subsection{Basic Definitions}
Assume a ground set $V$. Let $2^V$ be the collection of all subsets of $V$. In discrete optimization, we say that a set function $f: 2^V \rightarrow \mathbb{R}$ is submodular if it satisfies the following property.

\begin{mydef}[Submodularity\cite{edmonds2003submodular}]\label{def:submod2}
 $f$ is said to be submodular if for $A\subseteq B\subseteq V$ and $a\in V\setminus B$:
 \begin{equation}
    f(A\cup \{a\})-f(A)\geq f(B\cup \{a\})-f(B)
\end{equation}
\end{mydef}
An intuitive interpretation of submodularity suggests that the marginal gain of adding an element $a$ to a small set $A$ ($A$ subset of $B$) is greater or equal to adding the same element to the larger set $B$. %More informally, a submodular function has a diminishing value property as adding an element to a subset yields more value than adding this element to the set.

%An alternate definition also exists:

%\begin{mydef}[Submodularity alternate definition\cite{edmonds2003submodular}]\label{def2}
%Let $f: 2^V \rightarrow \mathbb{R}$ be a set function. $f$ is said to be submodular if, $\forall A,B \subseteq V$
%\begin{equation}
 %   f(A)+f(B)\geq f(A \cup B )+f(A \cap B)
%\end{equation}
%\end{mydef}
 %Definitions \ref{def:submod2} and \ref{def2} are equivalent. For ease of analysis, we will adopt definition \ref{def:submod2}. 
 
 Furthermore, a set function is said to be monotone if its value increases when more elements are added to a set. More formally,

\begin{definition}[Monotonicity]\label{def:mon}
A set function $f: 2^V \rightarrow \mathbb{R}$ is monotone if $ \forall A\subseteq B \subseteq V $
\begin{equation}
    f(A)\leq f(B).
\end{equation}
\end{definition}

We will shortly show that some of our constraints can be described as matroids. A matroid is an algebraic structure that generalizes the concept of independent vectors in linear algebra. In particular:

\begin{definition}[Matroid]
A matroid $M=(V,\cal L)$ consists of a non-empty finite set ground $V$ and a non-empty collection $\cal L$ of subsets of $V$ that satisfy the following properties: 
\begin{enumerate}
\item $\emptyset \in \mathcal{L}$
    \item If $I \in \cal L$ and $J \subset I$, then $J \in \cal L$.
    \item $I, J \in \cal L$ and $|I|>|J|$, then there exists $e \in I\backslash J$ such that $J \cup\{e\} \in \cal L$
\end{enumerate}
\end{definition}

The first two conditions describe the \textit{''hereditary property''}. This property suggests that each subset of a set in the collection $\mathcal{L}$ inherits the independence property. The third condition is usually called the \textit{''augmentation property''}. It implies that each element of the collection can be augmented to a larger set while maintaining the independence property.

When the ground set $V$ is partitioned into disjoint subsets $V_1,V_2,\dots,V_t$, where $t$ is a strictly positive integer, a particular class of matroids, called \textit{partition matroid} emerges:
 
\begin{definition}[Partition matroid]
A partition matroid $M=(V,\cal L)$ is a matroid such that $V$ is partitioned into $t$ disjoint partition sets $V_1,V_2,\dots,V_t$ and ${\cal L}= \{X\subseteq V:|X\cap V_i|\leq k_i,\,\, \forall i=1,2,\dots t\}$, where $0 \leq k_i \leq |V_i|$ are some given integer parameters.
\end{definition}

Now that we have introduced the proposed mathematical framework, let us reformulate the studied problem as a set function maximization problem. 

\subsection{Problem Reformulation}

To begin, let $\mathcal{K}$ be the set of all possible configurations of the UAVs\footnote{A network configuration designates a given network realization where the 3D locations of UAVs are fixed at some positions of the 3D grid.}. Since there are $J$ UAVs and $L$ locations, there are $K=L^{2J-1}$ possible configurations where $K=|\mathcal{K}|$. We also add a null UAV for each user to allow for the possibility that some users will be unassigned. 

We define our ground set $V=\{(i,j,k): i\in{\cal I},j\in{\cal J},k\in{\cal K}\}$. $V$ contains all the tuples formed by users, UAVs, and network configurations. We then partition the ground set $V$ into $K$ disjoint subsets, $V_1^C,V_2^C,...,V_K^C$, where $V_k^C=\{(i,j,k), i\in \mathcal{I}, j\in \mathcal{J},\}$ where $V_k^C$ is the set of all possible associations under a given configuration $k$ and the superscript indicates that the partition is according to the configuration index. Hence, the constraint that only one configuration is possible can be written as finding set $A\in I_C$ where
\begin{equation}
 I_C=
\begin{cases}
   A\subseteq V:|A\cap V_k^C|\leq e_k \mbox{ for some k}\in{\cal K},\\
   A\subseteq V:|A\cap V_n^C|=0 \,\,\,\forall\,\,\, n\in{\cal K}\setminus \{k\}
\end{cases}
\end{equation}
\noindent where $e_k$ is some number denoting the intersection of the two sets\footnote{$e_k$ can be seen as the total number of associated users for a given configuration.}. This constraint merely implies that, in the end, only one cofiguration is selected. 

\begin{remark}
It is noted that we could also set up a constraint for the UAV quota and another for the users' quota. However, we will show that this is not needed. We simply delegate the UAV quota, the users' quota and the minimum spectral efficiency conditions, conditions (\ref{waw11}), (\ref{wat}) and (\ref{wawf3}) in the optimization problem, to the set function evaluation. We also note that considering the set function evaluation over configurations helps us fix the interference experienced by users for a given configuration. As we show in the proof, this helps recover monotonicity and submodularity of the set function evaluated over a given configuration.
\end{remark}

\begin{proposition}
$M_C=(V,I_C)$ is a partition matroid.
\end{proposition}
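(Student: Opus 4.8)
The plan is to verify that $M_C=(V,I_C)$ is an instance of the template in the definition of a partition matroid given above; once the partition of the ground set and the associated capacity parameters are exhibited, membership in the class of partition matroids (hence of matroids) is immediate from that definition, so the three matroid axioms need not be re-derived from scratch. Concretely, I would check two things in turn: (i) that $\{V_k^C\}_{k\in\mathcal{K}}$ is genuinely a partition of $V$, and (ii) that $I_C$ has exactly the per-block-capacitated form $\{A\subseteq V:\,|A\cap V_k^C|\le e_k\ \forall k\}$ with admissible integer capacities.

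For (i), since $V_k^C=\{(i,j,k):i\in\mathcal{I},\,j\in\mathcal{J}\}$, two blocks with distinct indices $k\neq k'$ are disjoint because their tuples differ in the third coordinate, and $\bigcup_{k\in\mathcal{K}}V_k^C=\{(i,j,k):i\in\mathcal{I},j\in\mathcal{J},k\in\mathcal{K}\}=V$; hence $V$ splits into the $K=|\mathcal{K}|$ disjoint blocks $V_1^C,\dots,V_K^C$, precisely the structure the definition requires. For (ii), I would read the displayed definition of $I_C$ as: the second line forces every block except the active one to contribute nothing, and the first line caps the active block at $e_k$, so that $I_C=\{A\subseteq V:\,|A\cap V_k^C|\le e_k\ \forall k\in\mathcal{K}\}$. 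Taking the matroid capacities to be exactly these $e_k$, I would then check they are legitimate parameters, i.e.\ $0\le e_k\le|V_k^C|=|\mathcal{I}||\mathcal{J}|$; this holds because $e_k$ counts the users associated under configuration $k$, so $0\le e_k\le|\mathcal{I}|\le|\mathcal{I}||\mathcal{J}|$. With this identification, $M_C$ matches the definition of a partition matroid and we are done.

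The step I expect to need the most care is (ii): some attention must be paid to the ``for some $k$'' phrasing, because the naive reading ``$A$ is contained in a single block'' is \emph{not} a matroid in general (the augmentation axiom fails when $A$ and $B$ sit in different blocks). The reading I would adopt is the per-block-capacitated one above, for which the axioms are routine: $\emptyset\in I_C$ trivially; $I_C$ is closed under subsets since $B\subseteq A$ implies $|B\cap V_k^C|\le|A\cap V_k^C|\le e_k$ for every $k$; and augmentation follows by pigeonhole, since if $A,B\in I_C$ with $|A|>|B|$ then some block $V_k^C$ has $|A\cap V_k^C|>|B\cap V_k^C|$, whence $|B\cap V_k^C|<e_k$ and any $e\in(A\cap V_k^C)\setminus B$ yields $B\cup\{e\}\in I_C$. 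The text's remark that ``only one configuration is selected'' is then correctly understood as a property of the set-function maximizer (forced by the objective, which evaluates a single configuration), not as an additional matroid constraint; once this is made explicit, the proposition follows.
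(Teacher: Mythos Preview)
Your proposal is far more thorough than the paper's own proof, which is essentially two sentences: it observes that feasible $A\subseteq B$ must lie in the same configuration and then defers to a citation. The paper does not verify the partition structure, does not check the axioms, and does not engage with the issue you raise in (ii).

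You are right to flag the ``for some $k$'' reading. Taken literally, together with the second line $|A\cap V_n^C|=0$ for all $n\neq k$, the paper's $I_C$ consists of sets lying entirely in a single block $V_k^C$ with size at most $e_k$; this collection is \emph{not} a matroid, precisely because augmentation fails when $A$ and $B$ sit in different blocks, as you note. Your resolution---reading $I_C$ as the standard per-block-capacitated family $\{A:|A\cap V_k^C|\le e_k\ \forall k\}$ and relocating the ``single configuration'' requirement to the objective rather than the constraint set---is mathematically sound and does yield a partition matroid; your verification of the three axioms for that family is correct. But note that this is a \emph{correction} of the definition, not a proof of the proposition as stated: the paper's own proof sketch (``$A$ and $B$ must belong to the same configuration'') actually reinforces the single-block reading, so what the paper defines and what it claims to be a partition matroid do not quite match.

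In short, your argument is correct for the reinterpreted $I_C$, and you have surfaced a genuine gap that the paper's one-line proof glosses over. The paper's route and yours diverge because the paper simply cites away the verification, whereas you attempt it and discover that the literal definition needs repair.
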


\begin{proof}
We consider feasible sets $A\subseteq B\subset {\mathcal V}$. To maintain feasibility, $A$ and $B$ must belong to the same configuration. The proof follows immediately using the approach in \cite{matroids}.
\end{proof}

In light of the above definitions, our optimization problem can now be written as:

\begin{maxi!}
{A\in {2^V}}{f(A)\label{objective2}}
{\label{GeneralOptimizationOrigin2}}{}
\addConstraint{A\in I_C\label{wawz}}
{}{}
\end{maxi!}

The above problem can be equivalently written as:

\begin{equation}\label{ProbRefor}
\underset{k\in {\mathcal K}}{\text{ maximize}}\underset{A\in {I_C}}{\text{ maximize  }}{f^k(A)}
\end{equation}

\noindent where,

\begin{equation}
    f^k(A)=\sum\limits_{v_{ijk}\in A} R^k_{ij}.
\end{equation}

\noindent and $f^k(.)$ refers to the function evaluation over a given configuration. Since we must enforce that $A\in {I}_{C}$, we can only consider sets taking elements that belong to the same configuration, and $R^k_{ij}$ is the rate of user $i$ when it is associated with UAV $j$ and configuration $k$ is adopted. We now use the superscript $k$ to emphasize that the rate of user $i$ with UAV $j$ is calculated for a particular configuration $k$, so that we can set the interference for a particular configuration at a constant value.

\begin{proposition}
$f^k(.)$ is monotone and submodular.
\end{proposition}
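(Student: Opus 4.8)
The plan is to reduce everything to the observation that, once the configuration $k$ is fixed, the objective decouples into a sum of per-user ``best available rate'' functions, each of which is a truncated maximum. Fixing $k$ pins down the $3$D positions of all UAVs and therefore freezes every interference term $\sum_{l\neq j}P_lL_{il}(r_{il},d_{il})$ in the SINRs $\gamma_{ij}$; hence each rate $R^k_{ij}=b_j\log_2(1+\gamma_{ij})$ is a \emph{fixed nonnegative constant}, independent of the chosen set $A$. This is precisely the point anticipated in the remark above: configuration-wise bookkeeping removes the interference coupling that would otherwise spoil diminishing returns. Concretely, for $A\subseteq V$ let $A_i=\{\,j:(i,j,k)\in A,\ \eta_{ij}\geq\eta^{\rm min}\}$ be the set of QoS-feasible UAVs that $A$ offers user $i$ under configuration $k$, and put $g_i(A_i)=\max_{j\in A_i}R^k_{ij}$ with $g_i(\emptyset)=0$ (the rate of the null UAV). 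Since constraint~(\ref{wat}) lets each user keep at most one UAV, the value delegated to the set function is
\begin{equation}\label{eq:decomp}
f^k(A)=\sum_{i\in\mathcal I}g_i(A_i).
\end{equation}

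Monotonicity then follows at once: $A\subseteq B$ implies $A_i\subseteq B_i$ for all $i$, and the maximum of a finite set of nonnegative reals does not decrease when the set grows, so $g_i(A_i)\leq g_i(B_i)$ and $f^k(A)\leq f^k(B)$. For submodularity I would verify Definition~\ref{def:submod2} block by block. Take $A\subseteq B\subseteq V$ and $a=(i_0,j_0,k)\in V\setminus B$. If $\eta_{i_0 j_0}<\eta^{\rm min}$ the element is ignored in both evaluations and the inequality is $0\geq0$. Otherwise only the block $i=i_0$ is touched, so by~(\ref{eq:decomp}) the marginal gain of $a$ over $A$ equals $g_{i_0}(A_{i_0}\cup\{j_0\})-g_{i_0}(A_{i_0})$, and similarly over $B$. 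Writing $m_A=g_{i_0}(A_{i_0})\leq g_{i_0}(B_{i_0})=m_B$ and $r=R^k_{i_0 j_0}\geq0$, the gain over $A$ is $\max(r,m_A)-m_A=\max(0,r-m_A)$ and the gain over $B$ is $\max(0,r-m_B)$; since $t\mapsto\max(0,r-t)$ is nonincreasing and $m_A\leq m_B$ we get $\max(0,r-m_A)\geq\max(0,r-m_B)$. Each $g_i$ is therefore submodular, and a sum of submodular functions is submodular, so $f^k$ is submodular.

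The step I expect to require the most care is the per-UAV quota $N_j$ of constraint~(\ref{waw1}). If it is folded into $f^k$ naively --- keeping only the $N_j$ best users per UAV after keeping the best UAV per user --- then $f^k$ becomes a weighted bipartite $b$-matching value, and the maximum-weight common independent set of two matroids is in general not submodular, so the decomposition~(\ref{eq:decomp}) would collapse. The resolution I would adopt is to keep the quota \emph{outside} $f^k$: within a fixed configuration it is itself a partition matroid (partition the elements of configuration $k$ by UAV index, with caps $N_j$), so it can be enforced by the feasibility test of the greedy routine rather than by the objective, leaving $f^k$ in the clean ``sum of truncated maxima'' form~(\ref{eq:decomp}). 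If instead one insists on embedding the quota in $f^k$, the fallback is that the sum of the $N_j$ largest among a set of nonnegative weights is the max-weight independent-set function of a uniform matroid, hence submodular, and that the per-user and per-UAV truncations act on disjoint coordinates. Either way, the substantive content is disentangling the constraints from the objective; once the rates are frozen by the configuration, the objective is no harder than a sum of truncated maxima.
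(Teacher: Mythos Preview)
Your decomposition $f^k(A)=\sum_i g_i(A_i)$ with $g_i(S)=\max_{j\in S}R^k_{ij}$ is correct and is genuinely different from the paper's argument. The paper does \emph{not} separate user-wise; instead it folds the per-UAV quota $N_j$ into the evaluation of $f^k$ and argues submodularity by the specific ``replacement at quota'' case: take $A$ with $|A|=N_j-1$ associations to a single UAV $j$, set $B=A\cup\{b\}$ so that $B$ is exactly at quota, and compare the marginal of a new element $a$ against $A$ (gain $f^k(\{a\})$) versus against $B$ (gain $0$ or $f^k(\{a\})-f^k(\{b\})$ depending on whether $a$ displaces $b$). Your route instead handles the per-user constraint cleanly as a truncated maximum, which is the textbook submodular building block, and pushes the per-UAV quota out to a separate partition-matroid constraint for the greedy step. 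That buys you a fully rigorous diminishing-returns inequality for arbitrary $A\subseteq B$, whereas the paper's ``WLOG'' case is really just one illustrative scenario and does not cover general nested sets or the interaction between the two quota types.

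One caution on your last paragraph: your fallback claim that ``the per-user and per-UAV truncations act on disjoint coordinates'' is not right, and in fact your instinct two sentences earlier is the correct one. If both the per-user cap and the per-UAV cap are embedded in $f^k$, the resulting value is a maximum-weight bipartite $b$-matching on the edge set $A$, and that function is \emph{not} submodular in general. A small counterexample: two users, two UAVs, all quotas $1$, rates $R(i_1,j_1)=R(i_2,j_2)=3$ and $R(i_1,j_2)=R(i_2,j_1)=2$; with $A=\{(i_1,j_2)\}$, $B=\{(i_1,j_2),(i_2,j_2)\}$, and $a=(i_1,j_1)$ one gets marginal $1$ over $A$ but marginal $3$ over $B$. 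So keep your primary resolution --- externalize the UAV quota as a matroid constraint and leave $f^k$ as your sum of per-user maxima --- and drop the fallback; the greedy guarantee over a matroid (here the intersection with the configuration constraint) is then what carries the approximation bound.
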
 

\begin{proof}
We prove monotonicity first. Without loss of generality (WLOG), consider two subsets $A\subseteq B\subseteq V_k^C$, i.e., belonging to the same configuration set $k$. 
Let $A$ contain 4 UAVs with a given association for the users. Let $B$ contain $A$ in addition to another UAV with its associated users, then $f^k(A)\leq f^k(B)$ {is always true.} 

We proceed to prove submodularity. Consider any subset $A\subseteq B\subseteq V_k^C$ and $a\in V_k^C\setminus B$. WLOG, let $A$ be the set containing possible associations for users with UAV $j$ at configuration $k$ such that $|A|=N_j-1$, let $B=A\cup \{b\}$, where $b$ is some feasible element to be added to the set of users associated with UAV $j$. It is clear that $|B|=N_j$, hence $B$ is at UAV$_j$'s quota limit. WLOG, let $\{b\}=\arg\min \limits_{B} R^k_{ij}$, i.e. $\{b\}$ is also the element with the minimum contribution to the value $f^k(B)$. Now, consider the addition of another feasible element, $a$ to sets $A$ and $B$:

$$f^k(A\cup \{a\})-f^k(A)=f^k(\{a\}),$$

\noindent while
%\footnotesize{
\begin{equation}
f^k\!(\!B\cup \{a\}\!)\!-\!f^k\!(\!B\!)\!=\!\begin{cases}   
0\!<\!f^k(\!\{a\}\!) \mbox{ if } \eta_{b}\geq\eta_{a},\\
f^k\!(\!\{a\}\!)\!-\!f^k\!(\!\{b\}\!)\!<\!f^k\!(\!\{a\}\!), \mbox{ if } \eta_{a}\!>\!\eta_{b}.\\
\end{cases}
\end{equation}%}
%\
\noindent where in the above, and with a slight abuse of notation, we use $\eta_b$ to denote the spectral efficiency of element $b$. In evaluating $f^k(B)$, and since $B$ is already at its quota limit, we compare the spectral efficiencies of the existing element $b$ with $a$. 
\end{proof}

\subsection{$K$ Instances of the Greedy Algorithm}

Using the fact that $f^k(.)$ is monotone and submodular, we can now use a simple greedy algorithm to find the locations and associations for the UAVs and users. We use a greedy algorithm to evaluate the maximum for $f^k(.)$ for a given configuration, and then exhaustively find the maximum value for the set function over all configurations. The overall guaranteed performance is $1-1/e$-optimal. This is facilitated by the following lemma:

%=============
%WE NEED TO ADD THE COMBINED GREEDY ALGORITHM HERE

\begin{lemma}\label{Lemma}
Let $(\mathcal{P})$ be the problem of maximizing a monotone and submodular set function, i.e. $f^k(.)$. Consider the greedy algorithm which starts with an empty set $A_0$, and at each iteration $i$, it adds an element $e$ that maximizes $f(A_{i-1}\cup \{e\})-f(A_{i-1})$, i.e., 
\begin{equation}
    A_{i}=A_{i-1} \cup \{argmax_{\{e\}} f(A_{i-1}\cup \{e\})-f(A_{i-1}\}).
\end{equation}
The greedy algorithm provides $1-1/e$-approximation to the optimal solution of $(\mathcal{P})$ \cite{nemhauser1978}. 
\end{lemma}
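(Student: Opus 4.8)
The plan is to reproduce the classical argument of Nemhauser, Wolsey and Fisher. Fix a configuration and abbreviate $f:=f^k$; this $f$ is monotone (Definition~\ref{def:mon}) and submodular (Definition~\ref{def:submod2}), and it is normalized, $f(\emptyset)=\sum_{\emptyset}R^k_{ij}=0$. Let $A^\star$ be an optimal feasible set for $(\mathcal{P})$ and let $\kappa=|A^\star|$ (the cardinality allowed by the quota constraints folded into feasibility); run the greedy rule for $\kappa$ iterations to obtain $\emptyset=A_0\subseteq A_1\subseteq\cdots\subseteq A_\kappa$. The whole proof reduces to the one-step estimate
\begin{equation}\label{onestep}
f(A^\star)-f(A_i)\;\le\;\kappa\bigl(f(A_{i+1})-f(A_i)\bigr),\qquad i=0,\dots,\kappa-1.
\end{equation}

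First I would prove \eqref{onestep}. By monotonicity, $f(A^\star)\le f(A^\star\cup A_i)$. Enumerate $A^\star\setminus A_i=\{e_1,\dots,e_m\}$ with $m\le\kappa$ and telescope:
$f(A^\star\cup A_i)-f(A_i)=\sum_{\ell=1}^{m}\bigl(f(A_i\cup\{e_1,\dots,e_\ell\})-f(A_i\cup\{e_1,\dots,e_{\ell-1}\})\bigr)$.
Submodularity bounds each summand by the marginal gain $f(A_i\cup\{e_\ell\})-f(A_i)$ of adding $e_\ell$ to the smaller set $A_i$ alone, and by definition of the greedy step $A_{i+1}$ is chosen to maximize exactly this marginal gain, so each summand is at most $f(A_{i+1})-f(A_i)$. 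Summing the $m\le\kappa$ terms and chaining with $f(A^\star)\le f(A^\star\cup A_i)$ yields \eqref{onestep}.

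Next I would iterate. Setting $\delta_i:=f(A^\star)-f(A_i)\ge 0$, inequality \eqref{onestep} rearranges to $\delta_{i+1}\le(1-1/\kappa)\,\delta_i$, hence $\delta_\kappa\le(1-1/\kappa)^\kappa\delta_0\le e^{-1}\delta_0$. Since $f(A_0)=f(\emptyset)=0$ we have $\delta_0=f(A^\star)$, so $f(A_\kappa)=f(A^\star)-\delta_\kappa\ge(1-1/e)\,f(A^\star)$, which is the claimed approximation ratio for $(\mathcal{P})$. Finally, one invokes this once per configuration and keeps the best outcome over the $K$ configurations, as in \eqref{ProbRefor}: the global optimum is attained at some configuration $k^\star$, the greedy run on $k^\star$ is within $1-1/e$ of the best feasible set for that configuration, and taking the maximum over configurations can only improve on that, so the overall scheme is $(1-1/e)$-optimal.

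The step carrying the real content is \eqref{onestep}; the rest is routine. The only points worth checking carefully — rather than genuine obstacles — are that $f$ is normalized (so $\delta_0=f(A^\star)$, clear here) and that a feasible element to augment is available at each of the $\kappa$ greedy steps; monotonicity guarantees that if the quota is reached early, stopping loses nothing, and if an added element would violate the QoS threshold it simply contributes zero to the set-function evaluation, so the telescoping over $\le\kappa$ terms remains valid.
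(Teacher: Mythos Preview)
The paper does not supply its own proof of this lemma; it simply states the result and attributes it to \cite{nemhauser1978}. Your proposal correctly reproduces the classical Nemhauser--Wolsey--Fisher argument for cardinality-constrained monotone submodular maximization: the one-step estimate via monotonicity plus a telescoping use of submodularity, followed by the geometric recursion $\delta_{i+1}\le(1-1/\kappa)\delta_i$. This is exactly the proof the citation points to, so there is nothing to compare.
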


%Therefore, the problem in~(\ref{ProbRefor}) is solved using the greedy approach

%============================
%WE WILL SOLVE WITH THE GREEDY APPROACH FOR THE CONFIGURATION AND THEN EVALUATE THE MAXIMUM EXHAUSTIVELY OVER CONFIGURATIONS. 

%BUT SINCE THE NUMBER OF CONFIGURATIONS IS PROHIBITIVELY LARGE, WE INSTEAD WILL USE THE K-MEANS
%=======================
%K-MEANS EXPLANATION GOES HERE. 

While the above greedy algorithm ensures a good network performance, it requires listing all the possible configurations, which is time and memory consuming. However, we do not in fact need to list all the possible configurations. One approach to reduce the search space is to select the locations that are critical and are most likely to provide the best performance; in particular, the barycenters of the users' concentrations. For this purpose, we first run k-means as described in \textbf{Algorithm}~\ref{kmeans}. Each UAV is moved in the 2D plan to the barycenter of a cluster of users. The users within the same cluster are selected based on their SINR. Specifically, users are grouped with the UAV that maximizes their SINR. Hence, k-means selects the best 2D locations based on an SINR criterion. Then, a list of 3D configurations is formed by the 2D locations and the various possible heights of the setup. This process will drastically reduce the number of possible configurations without joepardizing performance as we show in the numerical simulations. The k-means combined with the greedy algorithm is described in \textbf{Algorithm}~\ref{kmeansCom}.

\begin{algorithm}%[H]
  \caption{K-means
    }\label{kmeans}
     \begin{algorithmic}[1] 
			\State{\textbf{Initialization}:}
			\State{ UAVs randomly scattered in the 2D space, }
			\State{ $C_j=\emptyset$, $\forall j \in \mathcal{J}$ }
			\State{Choose $N$, the maximum number of iterations.}
			\For{$n= 1:N$}
		    	\For{$j \in \mathcal{J}$}
		        	\For{$i \in \mathcal{I}$}
		            	\If{ $i=argmax_i \eta_{ij}$}
	            		\State{$C_j=C_j\cup\{i\}$}
		            	\EndIf
		           	\EndFor
		    	\State{$x_j=\frac{\sum \limits_{i \in \mathcal{I}} x_i}{|C_j|}$}
		    	\State{$y_j=\frac{\sum \limits_{i \in \mathcal{I}} y_i}{|C_j|}$}
		        \EndFor
		\EndFor
	\end{algorithmic}
\end{algorithm}

\begin{algorithm}%[H]
  \caption{Combined K-means and greedy
    }\label{kmeansCom}
     \begin{algorithmic}[1] 
			\State{\textbf{Initialization}:}
			\State{ Run \textbf{Algorithm}~\ref{kmeans} to reduce the number of 2D points. }
			\State{ List in $\mathcal{K}$ all the possible configurations of UAVs that are formed by the 2D points and the studied heights.}
			\For{$k \in \mathcal{K}$}
			\State{Let $A_0$ be an empty set}
			\For{$i=1:I\times J$}
			\State{$A_{i}=A_{i-1} \cup \{argmax_{\{e\}} f(A_{i-1}\cup \{e\})-f(A_{i-1}\})$, such that $e$ satisfies constraints~(\ref{waw11}), (\ref{wat}) and (\ref{wawf3}) }
			\EndFor
			\EndFor
	\end{algorithmic}
\end{algorithm}

\section{Greedy Approaches}\label{Greedy}
In this section, we describe the greedy algorithm that efficiently solves the underlying optimization with $1-1/e$-approximation. Then, we provide a faster heuristic, with no guaranteed performance, which achieves very good results in practice. 

\subsection{Greedy Algorithm}

\begin{algorithm}%[H]
  \caption{Greedy algorithm
    }\label{AlgoGree}
     \begin{algorithmic}[1] 
			\State{\textbf{Initialization}:}
			\State{$S=0$, initilization of the maximal sum-rate}\label{Line412}
			\State{$N_j^{\rm Current}=0, \forall j \in \mathcal{J}$, initialization of number of associated users to each UAV.}
				\State{$e^{\rm best}=e_1$, initilization of the best configuration}\label{Line41}
						\State{$q_{ij}^k=0, \forall i \in \mathcal{I}, j\in \mathcal {J}, k\in \mathcal{K}$ }

			\State{$\mathcal{L}=\mathcal{I}$, the set of not associated users}
			\For{$e_k $ a potential configuration}
			\For{ $n=1:I\times J$}
			\State{find $(i,j)$ s.t. $(i,j)=argmax_{(i,j)}(R_{ij}^{k})$}\label{G1}
			\If{user $\sum_{i}q_{ij}^k=0$ is not associated, $\sum_{ij}q_{ij}^k\leq N_{j}$, and $\eta_{ij}\geq \eta^{\rm min}$}
			\State{$q_{ij}^k=1$}\label{G2}
			\State{$N_j^{\rm Current}=N_j^{\rm Current}+1$}\label{G3}
			\EndIf
			\EndFor
			\If{$\sum_{(i,j)}q_{ij}^kR_{ij}^{k}> S$}\label{G4}
			\State{$S=\sum_{(i,j)}qa_{ij}^kR_{ij}^{k}$}
			\State{$e^{\rm best}=e_k$}\label{G5}
			\State{$\textbf{q}^{\rm best}=\textbf{q}^k$}
			\EndIf
			\EndFor
	\end{algorithmic}
\end{algorithm}
As stated in \textbf{Lemma}~\ref{Lemma}, the greedy algorithm will start by selecting the maximum rate for each configuration. Indeed, as described in \textbf{Algorithm}~\ref{AlgoGree}, for each configuration, the greedy algorithm connects the user and the UAV associated with the maximum rate among all pairs users-UAVs of the selected configuration (line~(\ref{G1})). The associated user is then removed from the list of considered users and the quota of its serving UAV is decremented (lines~(\ref{G2}),(\ref{G3})). Then, the second best rate is considered, and the associated user-UAV pair are connected. This process is repeated until all users are either associated or cannot be provided with satisfying rates (i.e. constraint (\ref{waw11}) cannot be satisfied for unassociated users), or all UAVs reach their maximum quota. At each configuration, the algorithm compares with the previous configurations (line~(\ref{G4})). If the selected configuration provides better sum-rate, than the best configuration is updated (line~(\ref{G5})). The process is repeated until all configurations are tested.

\subsection{Adapted Version of the Greedy}

\begin{algorithm}%[H]
  \caption{Adapted greedy algorithm
    }\label{Algo4}
     \begin{algorithmic}[1] 
			\State{\textbf{Initialization}:}
			\State{Sort the UAVs in a decreasing order according to their maximum quota, let $\hat{\mathcal{J}}$ be the set of ordered UAVs}\label{Line413}
						\State{$q_{ij}=0, \forall i \in \mathcal{I}, j\in \mathcal{J}$}

			\State{$\mathcal{L}=\mathcal{I}$, the set of not associated users}
			\For{$j \in \hat{\mathcal{J}}$}
			\State{Find the 3D location that maximizes the sum of the best $N_j$ users'rates, where the users belong to the set $\mathcal{L}$}\label{Line42}
			\State{Update the location of UAV $j$}
			\State{Associate UAV $j$ with the $N_j$ best non associated users, from $\mathcal{L}$, for which the quality of service is satisfied  }\label{Line43}
			\State{Update $\mathcal{L}$ by removing associated users}\label{Line44}
			\EndFor
		     
	\end{algorithmic}
\end{algorithm}
At this stage of the paper, we are rather interested in developing a fast algorithm that does not come with a guaranteed performance but provides very good results in practice. We refer to this algorithm as \textit{the adapted version of the greedy}. 

In \text{Algorithm}~\ref{Algo4}, we first sort the UAVs in decreasing order according to their maximum quota (line~(\ref{Line41})). The first UAV selects among all the possible locations of the 3D grid the one that provides the best sum-rate of the best $N_j$ users' rates (where $N_j$ is the maximum quota of UAV $j$) (line~(\ref{Line42})). The $N_j$ users with the best rates are therefore associated with the UAV (line~(\ref{Line43})), and their association is never reconsidered in the next steps of the algorithm (line~(\ref{Line44})). Then, the process is repeated for the remaining UAVs and users. The process ends when the UAV with the minimum quota has been associated with its users.

\section{BLLL vs. Greedy: a Fair Comparison}\label{BLLLVs}

In this section, we compare the previously proposed approaches in terms of convergence rate, computational complexity, memory requirement, and exchanged information. This comparison is summarized in TABLE~\ref{tab:my_label}.
\subsection{Convergence Time}

The BLLL search approach allows us to select an action with a certain probability. This probability is dependent on the utility of the action and the temperature parameter. The higher the utility, the higher the probability that it will be selected. Initially, the temperature is set to a high value in order to allow a wide exploration of the search space. As iterations increase, the temperature is cooled down in order to eliminate unsuccessful strategies. Clearly, the convergence rate of the BLLL depends on two main parameters: the initial temperature, and the cooling scheme of the temperature. It has been shown in \cite{brusco2014comparison} that the logarithmic scheme is one of the most efficient temperature decays. This scheme suggests that at each iteration $t$, the temperature is given by $T(t)=\frac{T_0}{log(1+t)}$, where $T_0$ is the initial temperature. Although such a cooling approach allows a very slow decrease of the temperature, it ensures the convergence to the global optimum when enough iterations are provided. It is also important to note that when the initial temperature is too low, the search space will be reduced, and the algorithm can get trapped in a local optimum. One guideline is to tune the initial temperature based on the first realizations of the utility function, or to set the initial temperature to a high value.

The greedy algorithm also requires a large number of iterations, especially if the search space is not reduced. This is because it has to go through all the possible configurations of the network. However, when we remove configurations that are unlikely to be efficient, the convergence time is significantly reduced. In general, the greedy algorithm will take at most $K\times I\times J$  iterations, where $K$ is the number of possible configurations. On the other hand, the adapted greedy will only require $J$ iterations to converge. 

\subsection{Computational Complexity and Memory Requirement}
%\cite{nourani1998comparison}
From a computational perspective, the UAVs perform simple algebraic operations when they adhere to the BLLL. Essentially, the active UAV, as well as the other UAVs, need to observe the impact of the action on the throughput of their users. Then, each UAV has to compute and broadcast its aggregated throughput (i.e., local sum-rate of its served users) to the active UAV. Also, the UAVs have only to memorize the utility of their previous action, leading to very low memory requirements. 

Similarly, the greedy algorithm does not require computational complexity as it only computes the rates of the users at various UAVs locations. However, it requires high memory storage capacity as it compares rates at different heights.

On the other hand, the adapted greedy approach requires more computational efforts as every UAV has to solve a local optimization problem. In particular, the first UAV has to select, among all the possible locations in the 3D gird, the one that maximizes its local sum-rate. Similarly, the second UAV chooses from the remaining locations the one that maximizes its aggregated throughput. This process is repeated for all the UAVs, one by one, who select from the remaining locations the ones that improve their local sum-rates. At the same time, the algorithm does not require significant memory storage.

%Consequently, the complexity of the greedy approach is linear with respect to the number of 3D locations in the grid.

\subsection{Exchanged Information}

Based on its formulation in equation (\ref{marginal}), the UAV's utility relies on global and complete information of the network when BLLL is adopted. Indeed, in order to fit into the potential game framework, the utility is designed as the marginal contribution of the player (i.e. UAV). This implies that each UAV has to compute the sum-rate of all associated users when this UAV is part of the game and when it is not. Clearly, significant knowledge is required. Not only does the UAV need to know the throughput of its served users at its selected 3D location, but also knowledge is needed of the throughput of users that are connected to all other UAVs. This will entail a considerable amount of exchanged information in the network. Unfortunately, the convergence of the BLLL to the global optimum comes at the expense of complete network knowledge. Instead, less information is needed in interference-limited networks. In such networks, each UAV's utility depends on the information available at its neighboring UAVs only (i.e., UAVs within interference range). This may drastically reduce the amount of exchanged information while still achieving good performance.

Compared to the BLLL, the greedy algorithm implementation is centralized. This also suggests a high information exchange between users, UAVs, and the centralized entity.
Instead, the adapted version of the greedy involves much less information exchange. At each iteration of the algorithm, the UAV needs only to observe the throughput of its served users. No information is required from the previously deployed UAVs.       
\begin{table*}
    \centering
    \begin{tabular}{|c|c|c|c|c|c|}
    \hline
         & Convergence & Computations & Memory& Information exchange & Implementation  \\
          \hline
         BLLL& Exponential & Algebraic & Small& Complete knowledge & Distributed\\
          \hline
         Greedy & $K\times I \times J $ & Algebraic&  Large &Complete knowledge & Centralized \\
          \hline
         Adapted Greedy & $J$ & Solve local optimization & Small & Local knowledge & Distributed \\
          \hline
    \end{tabular}
    \caption{Comparison of proposed algorithms}
    \label{tab:my_label}
\end{table*}
%To sum up, although the greedy approach requires more computational efforts, it provides a guaranteed convergence in a few number of iterartions and requires only local information. 
\section{Simulation Results}\label{Simul} 
To assess the performance of the studied algorithms, we consider the following scenario. 
We assume $45$ users, randomly scattered in an area of $1000\times 1000 m^2$. $5$ UAVs are considered to provide connectivity to the ground users. The UAVs positions are initially set to some random locations as shown in Fig. All the drones are assumed to transmit with the same power $P=10$ dBm. In order to account for the path loss, we suppose $\zeta_{\rm LoS}=1$ dB, $\zeta_{\rm NLoS}=20$ dB, $\alpha=9.61$, $\beta=0.16$, $f_c=2$ GHz, and $c=3*10^8 m/s$. The simulation settings are summarized in TABLE~\ref{Tab}.
\begin{table}
\begin{center}
\begin{tabular}{|c|c||c|c|}
\hline
Parameter & Value & Parameter & Value\\
\hline
%$R_i$& Random in $[90,100]$ Mbps& P&10 dBm\\
Area & $1000\times1000$ & $\delta_x$& $10$m\\
\hline

$\delta_y$& $10$m& $\delta_h$& $10$m  \\

\hline 
$h^{\rm min}$ & $100$m & $h^{\rm max}$ & $200$m\\

\hline
$\eta^{\rm min}$& -3 dB & $I$ &$45$ \\
\hline
%$\mu$& 1&
%$p$& 0.9\\
%\hline
%$\lambda_G$& $0.22*10^{-4}$ GBS per $m^2$&
%$|\mathcal{U}|$&$46$\\
%\hline

$J$& $5$ & $P_j$& $10$ dBm\\
\hline

$\alpha$& $9.61$ & $\beta$& $0.16$\\
\hline

$c$&$3.10^8m\!/s$ & $\zeta_{\rm LoS}$& $1$ dB\\
\hline

$\zeta_{\rm NLoS}$ & $20$dB &$N_j$ & 4\\
	\hline	
	%\multirow{1}{*}{ (MHz)} &
	%  \multicolumn{3}{c|}{} \\

\end{tabular}
\caption{Simulation settings.}
\label{Tab}
\end{center}
\end{table}

Fig.~\ref{figInitial}(a) plots the initial 3D locations of UAVs for the studied scenario. In Fig.~\ref{figInitial}(b), we plot the 3D grid for all possible positions of the UAVs, while in Fig.~\ref{figInitial}(c), we show the selected 3D positions after the reduction of the search space using k-means.  

 Fig.~\ref{figSum} plots the network sum-rate vs. the number of iterations. As can be seen from Fig.~\ref{figSum}, although BLLL requires the highest number of iterations to converge, it still provides the best performance. On the other side, less performance is achieved when the greedy algorithm is adopted. However, only a few iterations are needed to reach an efficient value of the network sum-rate. Finally, only fewer iterations are needed for the adapted version of the greedy algorithm in order to ensure convergence. The number of iterations for the adapted greedy approach is equal to the number of UAVs. The adapted greedy, however, achieves the lowest performance compared to the greedy and BLLL approaches. It is also to be noted that the results provided by all approaches are above 1-1/e maximum.

 In Fig.~\ref{figMov}, we plot the 3D movement of the UAVs under the studied algorithms setup. It can be seen from Fig.~\ref{figMov}(a), that the UAVs move sequentially in the 3D space before reaching their final 3D locations. Each UAV finds its best location in order to cover the maximum number of users allowed by its quota. The heights and 2D coordinates of UAVs are adjusted in order to reduce interference and ensure the best network sum-rate. In Fig.~\ref{figMov}(b), it can be noticed that each UAV has only to move once in order to reach its final location. This is because the greedy algorithm will not allow the UAVs to move unless a better location is found. In the studied scenario, the best locations for UAVs were found in the second iteration. The adapted version of the greedy, plotted in Fig.~\ref{figMov}(c), allows one UAV movement at a time. The UAVs are moved one by one to the 3D location that maximizes their aggregate sum-rate. 
\begin{figure}
    \centering
    \includegraphics[scale=0.5]{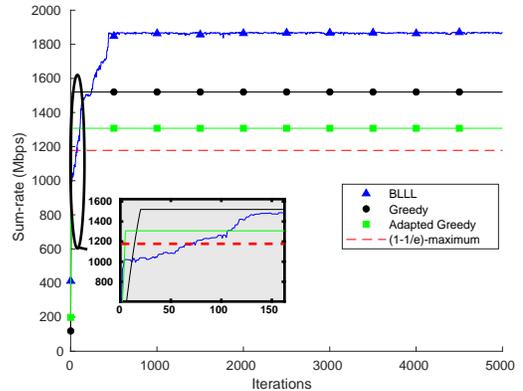}
    \caption{Sum-rate convergence under BLLL, greedy and adapted gready algorithms.}
    \label{figSum}
\end{figure}

\begin{figure*}
\begin{tabular}{ccc}
    \includegraphics[scale=0.38]{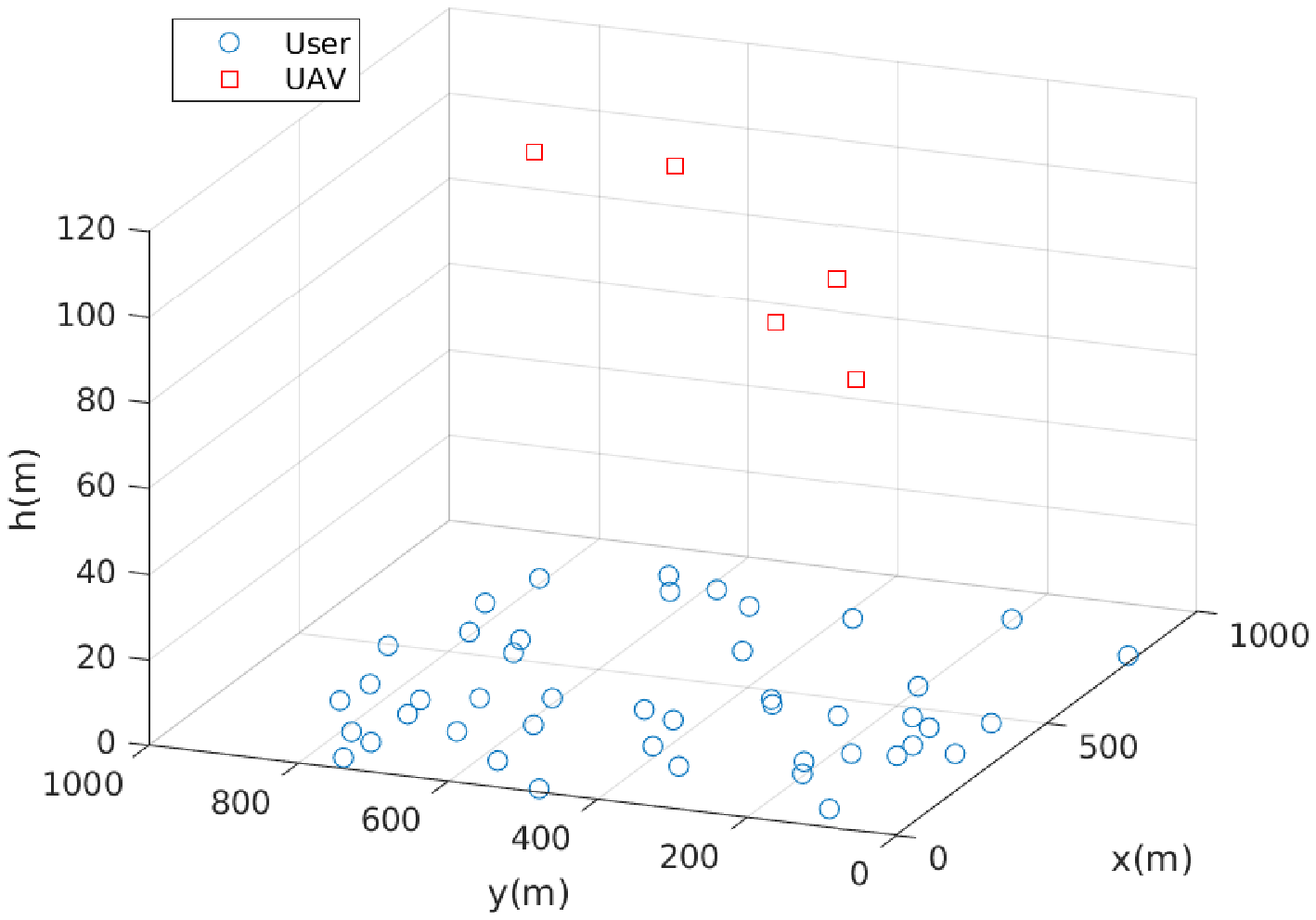}&
    \includegraphics[scale=0.38]{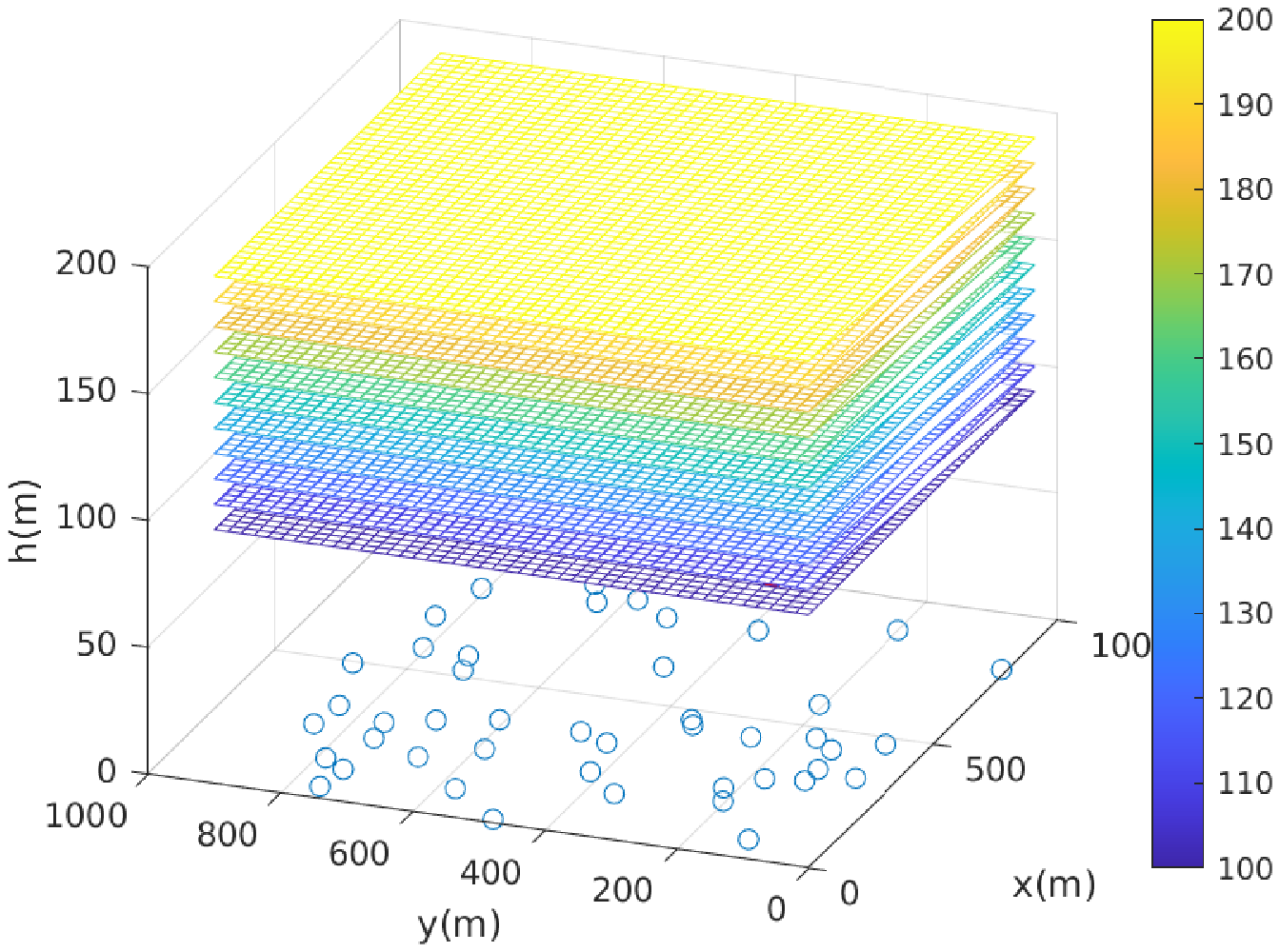}&
    \includegraphics[scale=0.38]{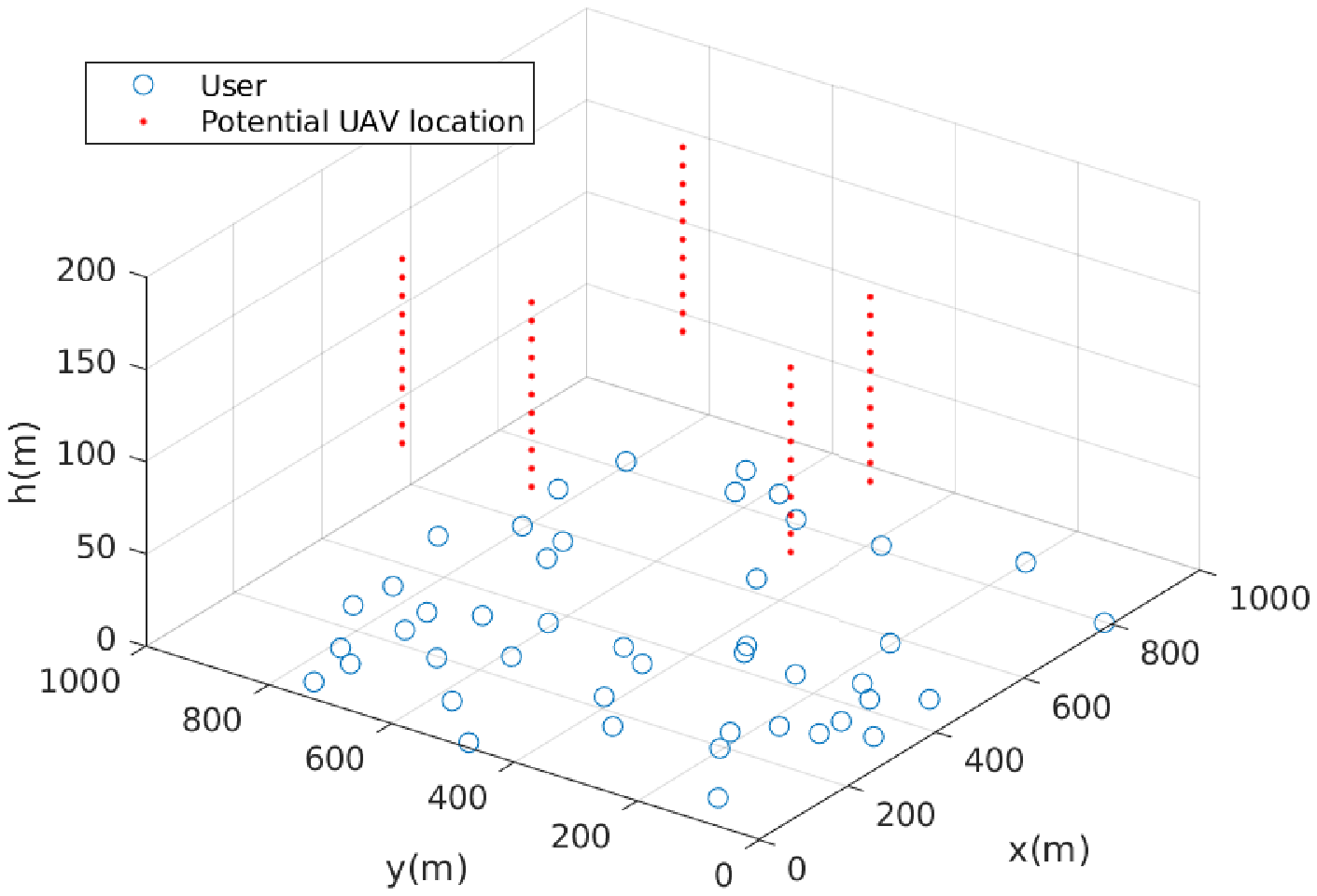}\\
    (a) & (b) & (c)
    \end{tabular}
    \caption{(a) Initial configuration (b) Search space (c) Potential locations of UAVs using k-means.}
    \label{figMov}
\end{figure*}

\begin{figure*}
\begin{tabular}{ccc}
    \includegraphics[scale=0.38]{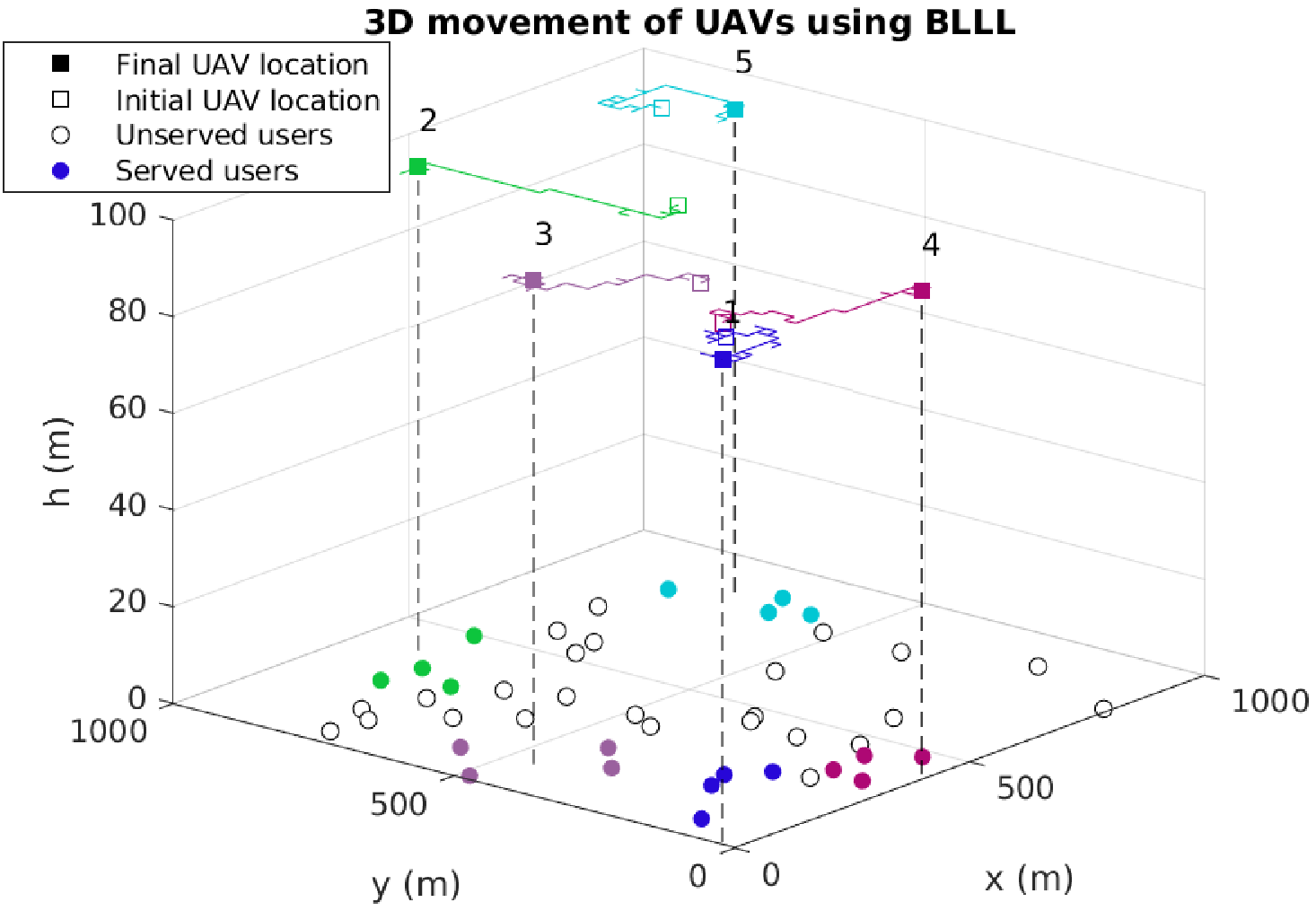}&
    \includegraphics[scale=0.38]{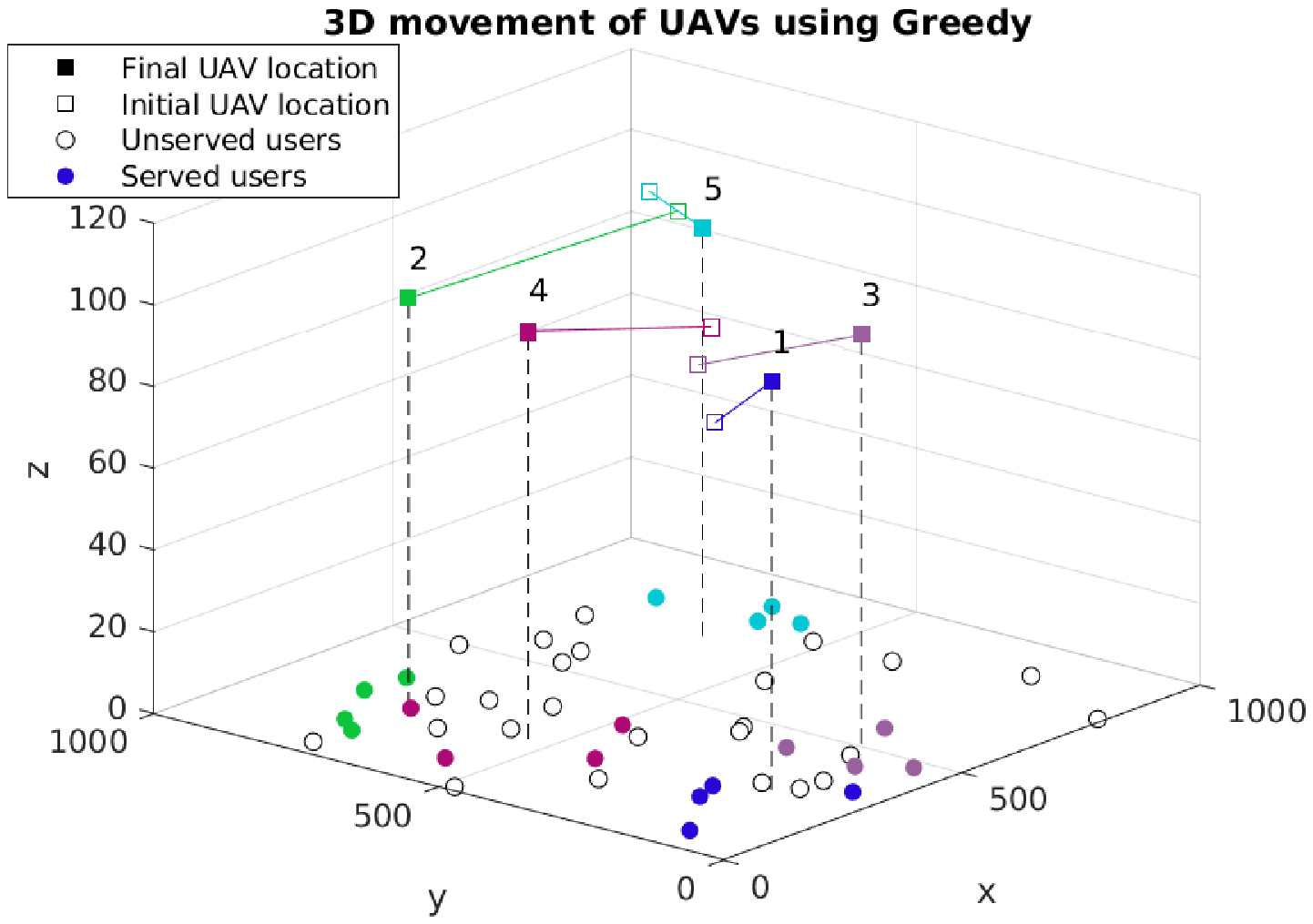}&
    \includegraphics[scale=0.38]{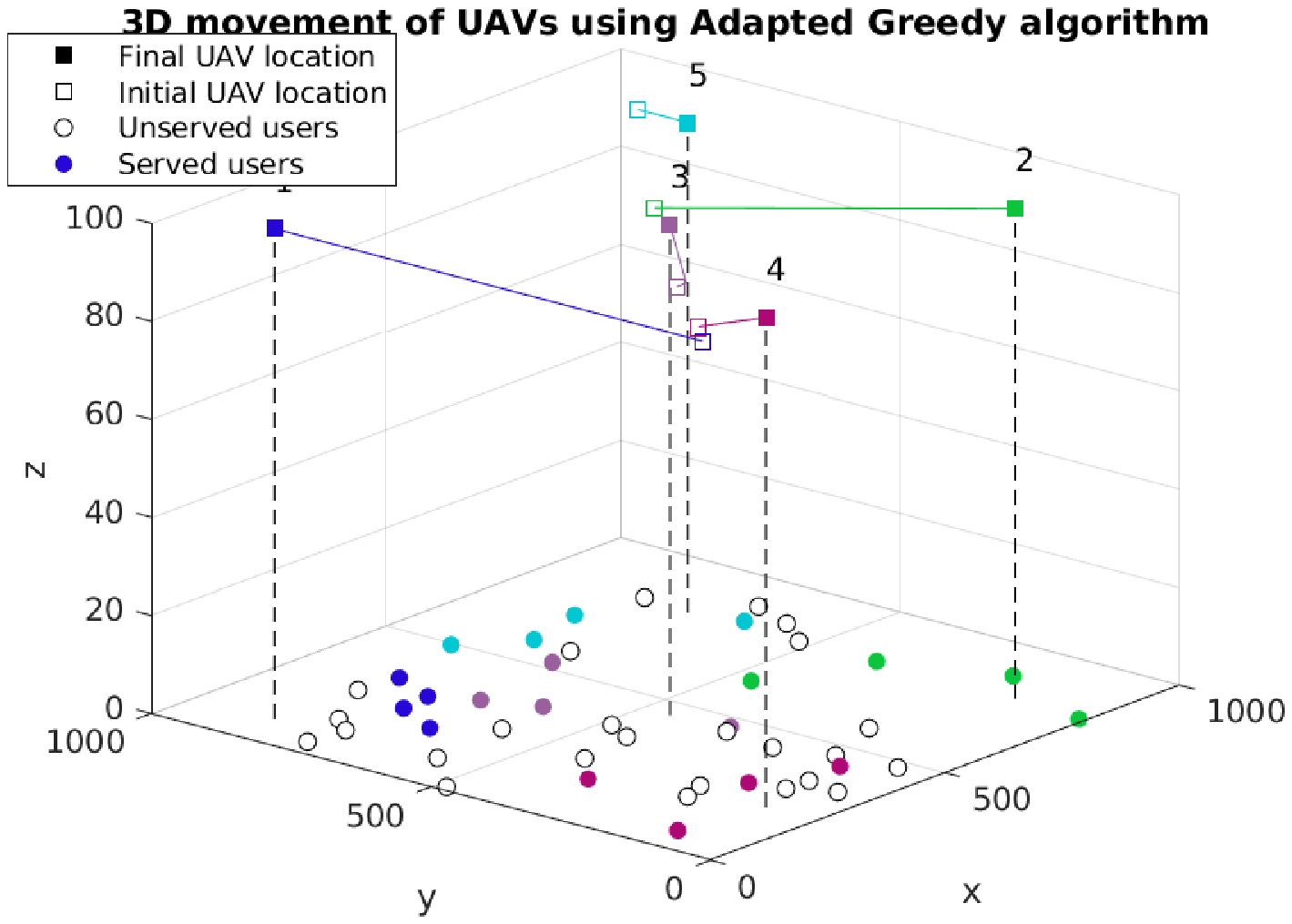}\\
    (a) & (b) & (c)
    \end{tabular}
    \caption{3D movements of UAVs under (a) BLLL, (b) greedy, and (c) adapted greedy algorithms.}
    \label{figInitial}
\end{figure*}
\begin{figure}
    \centering
    \includegraphics[scale=0.6]{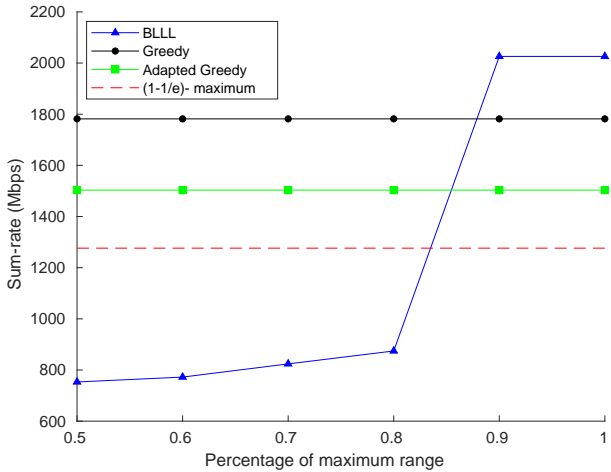}
    \caption{Impact of the neighborhood range on BLLL performance.}
    \label{Range}
\end{figure}
In Fig.~\ref{Range}, we consider a scenario of $10$ UAVs and $60$ users. We plot the final value of the sum-rate (i.e. after convergence of algorithms) vs. the percentage of neighborhood range of UAVs. In this figure, we show that when each UAV receives limted information from the environment (only the UAVs within its range can share the information about their current aggregate rate), a less performance is achieved by the BLLL. As can be seen from the figure, the greedy and adapted greedy algorithms outperform the BLLL when limited information is exchanged (i.e., small range is adopted).  
%\balance
\section{Conclusion}
In this paper, we addressed the problem of joint 3D placement and users association in UAVs-enabled networks. We proposed three algorithms. The first is guaranteed to reach the global optimum of the sum-rate function at the expense of exponential convergence time. The second exploits the submodularity of the studied problem and has a performance guarantee of $1-1/e$-approximation. The third requires only a fewer iterations, and while it has no guaranteed performance, it achieves very good results in simulations.

\bibliographystyle{IEEEbib}
\bibliography{BiblioUAV}

\begin{thebibliography}{10}

\bibitem{FAA}
www.google.com/loon Federation of aviation~authority (FAA),
\newblock ``Unmanned aircraft systems,''
\newblock 2019.

\bibitem{Survey2016}
N~.H. Motlagh, T.~Taleb, and O.~Arouk,
\newblock ``Low-altitude unmanned aerial vehicles-based internet of things
  services: Comprehensive survey and future perspectives,''
\newblock {\em IEEE Internet of Things Journal}, vol. PP, no. 99, pp. 1--27,
  2016.

\bibitem{li2018uav}
Bin Li, Zesong Fei, and Yan Zhang,
\newblock ``Uav communications for 5g and beyond: Recent advances and future
  trends,''
\newblock {\em IEEE Internet of Things Journal}, vol. 6, no. 2, pp. 2241--2263,
  2018.

\bibitem{hayat2016survey}
S.~Hayat, E.~Yanmaz, and R.~Muzaffar,
\newblock ``Survey on unmanned aerial vehicle networks for civil applications:
  A communications viewpoint.,''
\newblock {\em IEEE Communications Surveys and Tutorials}, vol. 18, no. 4, pp.
  2624--2661, 2016.

\bibitem{fotouhi2019survey}
Azade Fotouhi, Haoran Qiang, Ming Ding, Mahbub Hassan, Lorenzo~Galati Giordano,
  Adrian Garcia-Rodriguez, and Jinhong Yuan,
\newblock ``Survey on uav cellular communications: Practical aspects,
  standardization advancements, regulation, and security challenges,''
\newblock {\em IEEE Communications Surveys \& Tutorials}, vol. 21, no. 4, pp.
  3417--3442, 2019.

\bibitem{yan2019comprehensive}
Chaoxing Yan, Lingang Fu, Jiankang Zhang, and Jingjing Wang,
\newblock ``A comprehensive survey on uav communication channel modeling,''
\newblock {\em IEEE Access}, vol. 7, pp. 107769--107792, 2019.

\bibitem{mozaffari2018tutorial}
M.~Mozaffari, W.~Saad, M.~Bennis, Y.~Nam, and M.~Debbah,
\newblock ``A tutorial on {UAVs} for wireless networks: Applications,
  challenges, and open problems,''
\newblock {\em arXiv preprint arXiv:1803.00680}, 2018.

\bibitem{gupta2016survey}
L.~Gupta, R.~Jain, and G.~Vaszkun,
\newblock ``Survey of important issues in uav communication networks,''
\newblock {\em IEEE Communications Surveys \& Tutorials}, vol. 18, no. 2, pp.
  1123--1152, 2016.

\bibitem{mkiramweni2019survey}
Mbazingwa~Elirehema Mkiramweni, Chungang Yang, Jiandong Li, and Wei Zhang,
\newblock ``A survey of game theory in unmanned aerial vehicles
  communications,''
\newblock {\em IEEE Communications Surveys \& Tutorials}, vol. 21, no. 4, pp.
  3386--3416, 2019.

\bibitem{bulut2018trajectory}
E.~Bulut and I.~G{\"u}ven{\c{c}}l,
\newblock ``Trajectory optimization for cellular-connected {UAVs} with
  disconnectivity constraint,''
\newblock in {\em IEEE International Conference on Communications (ICC)},
  Kensas,USA, May 2018.

\bibitem{zeng2017energy}
Y.~Zeng and R.~Zhang,
\newblock ``Energy-efficient {UAV} communication with trajectory
  optimization,''
\newblock {\em IEEE Transactions on Wireless Communications}, vol. 16, no. 6,
  pp. 3747--3760, 2017.

\bibitem{ChannelKhuwaja}
A~A. Khuwaja, Y.~Chen, N.~Zhao, M.-S. Alouini, and P.~Dobbins,
\newblock ``A survey of channel modeling for {UAV} communications,''
\newblock {\em IEEE Communications Surveys Tutorials}, vol. 20, no. 4, pp.
  2804--2821, 2018.

\bibitem{ravi2016downlink}
V.~V.~C. Ravi and H.~Dhillon,
\newblock ``Downlink coverage probability in a finite network of unmanned
  aerial vehicle ({UAV}) base stations,''
\newblock in {\em IEEE International Workshop on Signal Processing Advances in
  Wireless Communications (SPAWC)}, Edinburgh, United Kingdom, Jul. 2016, pp.
  1--5.

\bibitem{hayajneh2016optimal}
A.~M. Hayajneh, S.~A.~R. Zaidi, D.~C. McLernon, and M.~Ghogho,
\newblock ``Optimal dimensioning and performance analysis of drone-based
  wireless communications,''
\newblock in {\em IEEE Global communications conference workshops (GLOBECOM)},
  Washington DC, USA, Dec. 2016.

\bibitem{mozaffari2015drone}
M.~Mozaffari, W.~Saad, M.~Bennis, and M.~Debbah,
\newblock ``Drone small cells in the clouds: Design, deployment and performance
  analysis,''
\newblock in {\em IEEE Global Communications Conference (GLOBECOM)}, San Diego,
  USA, Dec. 2015, pp. 1--6.

\bibitem{mozaffari2016efficient}
M.~Mozaffari, W.~Saad, M.~Bennis, and M.~Debbah,
\newblock ``Efficient deployment of multiple unmanned aerial vehicles for
  optimal wireless coverage,''
\newblock {\em IEEE Communications Letters}, vol. 20, no. 8, pp. 1647--1650,
  2016.

\bibitem{shakhatreh2017efficient}
Hazim Shakhatreh, Abdallah Khreishah, Ayoub Alsarhan, Issa Khalil, Ahmad
  Sawalmeh, and Noor~Shamsiah Othman,
\newblock ``Efficient 3d placement of a uav using particle swarm
  optimization,''
\newblock in {\em 2017 8th International Conference on Information and
  Communication Systems (ICICS)}. IEEE, 2017, pp. 258--263.

\bibitem{alzenad20173}
M.~Alzenad, A.~El-Keyi, F.~Lagum, and H.~Yanikomeroglu,
\newblock ``{3-D} placement of an unmanned aerial vehicle base station
  {(UAV-BS)} for energy-efficient maximal coverage,''
\newblock {\em IEEE Wireless Communications Letters}, vol. 6, no. 4, pp.
  434--437, 2017.

\bibitem{kalantari2017user}
E.~Kalantari, I.~Bor-Yaliniz, A.~Yongacoglu, and H.~Yanikomeroglu,
\newblock ``User association and bandwidth allocation for terrestrial and
  aerial base stations with backhaul considerations,''
\newblock in {\em International Symposium on Personal, Indoor, and Mobile Radio
  Communications (PIMRC)}, Montreal, QC, Canada, Sep. 2017, pp. 1--6.

\bibitem{farooq2018multi}
M.~J. Farooq and Q.~Zhu,
\newblock ``A multi-layer feedback system approach to resilient connectivity of
  remotely deployed mobile {Internet of Things},''
\newblock {\em IEEE Transactions on Cognitive Communications and Networking},
  vol. 4, no. 2, pp. 422--432, 2018.

\bibitem{mozaffari2017mobile}
M.~Mozaffari, W.~Saad, M.~Bennis, and M.~Debbah,
\newblock ``Mobile unmanned aerial vehicles {(UAVs)} for energy-efficient
  internet of things communications,''
\newblock {\em IEEE Transactions on Wireless Communications}, vol. 16, no. 11,
  pp. 7574--7589, 2017.

\bibitem{el2019learn}
Hajar El~Hammouti, Mustapha Benjillali, Basem Shihada, and Mohamed-Slim
  Alouini,
\newblock ``Learn-as-you-fly: A distributed algorithm for joint 3d placement
  and user association in multi-uavs networks,''
\newblock {\em IEEE Transactions on Wireless Communications}, vol. 18, no. 12,
  pp. 5831--5844, 2019.

\bibitem{woeginger2003exact}
G.~J Woeginge,
\newblock ``Exact algorithms for {NP}-hard problems: A survey,''
\newblock in {\em Combinatorial Optimization-Eureka, You Shrink!}, pp.
  185--207. 2003.

\bibitem{hou2011distributed}
I-Hong Hou and Piyush Gupta,
\newblock ``Distributed resource allocation for proportional fairness in
  multi-band wireless systems,''
\newblock in {\em 2011 IEEE International Symposium on Information Theory
  Proceedings}. IEEE, 2011, pp. 1975--1979.

\bibitem{borst2013nonconcave}
Sem~C Borst, Mihalis~G Markakis, and Iraj Saniee,
\newblock ``Nonconcave utility maximization in locally coupled systems, with
  applications to wireless and wireline networks,''
\newblock {\em IEEE/ACM Transactions on Networking}, vol. 22, no. 2, pp.
  674--687, 2013.

\bibitem{marden2012revisiting}
J.~R Marden and Jeff~S J.~S~Shamma,
\newblock ``Revisiting log-linear learning: Asynchrony, completeness and
  payoff-based implementation,''
\newblock {\em Games and Economic Behavior}, vol. 75, no. 2, pp. 788--808,
  2012.

\bibitem{xu2016distributed}
Yuhua Xu, Jinlong Wang, and Qihui Wu,
\newblock ``Distributed learning of equilibria with incomplete, dynamic, and
  uncertain information in wireless communication networks,''
\newblock in {\em Game Theory Framework Applied to Wireless Communication
  Networks}, pp. 63--86. IGI Global, 2016.

\bibitem{dai2018energy}
Haibo Dai, Yongming Huang, Yuhua Xu, Chunguo Li, Baoyun Wang, and Luxi Yang,
\newblock ``Energy-efficient resource allocation for energy harvesting-based
  device-to-device communication,''
\newblock {\em IEEE Transactions on Vehicular Technology}, vol. 68, no. 1, pp.
  509--524, 2018.

\bibitem{zhu2013distributed}
Minghui Zhu and Sonia Mart{\'\i}nez,
\newblock ``Distributed coverage games for energy-aware mobile sensor
  networks,''
\newblock {\em SIAM Journal on Control and Optimization}, vol. 51, no. 1, pp.
  1--27, 2013.

\bibitem{su2016submodular}
Gongchao Su, Bin Chen, Xiaohui Lin, Hui Wang, and Lemin Li,
\newblock ``A submodular optimization framework for outage-aware cell
  association in heterogeneous cellular networks,''
\newblock {\em Mathematical Problems in Engineering}, vol. 2016, 2016.

\bibitem{lakiotakis2019joint}
Emmanouil Lakiotakis, Pavlos Sermpezis, and Xenofontas Dimitropoulos,
\newblock ``Joint optimization of uav placement and caching under battery
  constraints in uav-aided small-cell networks,''
\newblock in {\em Proceedings of the ACM SIGCOMM 2019 Workshop on Mobile
  AirGround Edge Computing, Systems, Networks, and Applications}, 2019, pp.
  8--14.

\bibitem{wang2016low}
Junyuan Wang, Huiling Zhu, Lin Dai, Nathan~J Gomes, and Jiangzhou Wang,
\newblock ``Low-complexity beam allocation for switched-beam based multiuser
  massive {MIMO} systems,''
\newblock {\em IEEE Transactions on Wireless Communications}, vol. 15, no. 12,
  pp. 8236--8248, 2016.

\bibitem{al2014modeling}
A.~Al-Hourani, S.~Kandeepan, and A.~Jamalipour,
\newblock ``Modeling air-to-ground path loss for low altitude platforms in
  urban environments,''
\newblock in {\em IEEE Global Communications Conference (GLOBECOM)}, Austin,
  USA, Dec. 2014, pp. 2898--2904.

\bibitem{monderer1996potential}
D.~Monderer and L.~Shapley,
\newblock ``Potential games,''
\newblock {\em Games and Economic Behavior}, vol. 14, no. 1, pp. 124--143,
  1996.

\bibitem{su2016}
Chen B. Lin X. Wang H. Li~L Su, G.,
\newblock ``A submodular optimization framework for outage-aware cell
  association in heterogeneous cellular networks,''
\newblock {\em Mathematical Problems in Engineering}, 2016.

\bibitem{edmonds2003submodular}
Jack Edmonds,
\newblock ``Submodular functions, matroids, and certain polyhedra,''
\newblock in {\em Combinatorial Optimization—Eureka, You Shrink!}, pp.
  11--26. Springer, 2003.

\bibitem{matroids}
P.~{Sermpezis}, T.~{Spyropoulos}, L.~{Vigneri}, and T.~{Giannakas},
\newblock ``Femto-caching with soft cache hits: Improving performance with
  related content recommendation,''
\newblock in {\em GLOBECOM 2017 - 2017 IEEE Global Communications Conference},
  Dec 2017, pp. 1--7.

\bibitem{nemhauser1978}
George~L Nemhauser, Laurence~A Wolsey, and Marshall~L Fisher,
\newblock ``An analysis of approximations for maximizing submodular set
  functions—i,''
\newblock {\em Mathematical programming}, vol. 14, no. 1, pp. 265--294, 1978.

\bibitem{brusco2014comparison}
Michael~J Brusco,
\newblock ``A comparison of simulated annealing algorithms for variable
  selection in principal component analysis and discriminant analysis,''
\newblock {\em Computational Statistics \& Data Analysis}, vol. 77, pp. 38--53,
  2014.

\end{thebibliography}

\end{document}